\newtheorem{theorem}{Theorem}
\newtheorem{corollary}[theorem]{Corollary}
\newtheorem{definition}[theorem]{Definition}
\newtheorem{lemma}[theorem]{Lemma}
\newtheorem{proposition}[theorem]{Proposition}
\newlength{\blank}
\newenvironment{remark}{\noindent\textbf{Remark\ }}{}
\mathchardef\ordinarycolon\mathcode`\:
\def\vcentcolon{\mathrel{\mathop\ordinarycolon}}
\newcommand{\nc}{\newcommand}
\nc{\rnc}{\renewcommand}
\nc{\beq}{\begin{equation}}
\nc{\eeq}{{\end{equation}}}
\nc{\beqa}{\begin{eqnarray}}
\nc{\eeqa}{\end{eqnarray}}
\nc{\lbar}[1]{\overline{#1}}
\nc{\ket}[1]{|#1\rangle}
\nc{\bra}[1]{\langle#1|}
\nc{\braket}[2]{\langle #1 | #2 \rangle}
\nc{\ketbra}[2]{|#1\rangle\!\langle#2|}
\nc{\proj}[1]{| #1\rangle\!\langle #1 |}
\nc{\avg}[1]{\langle#1\rangle}
\nc{\Rank}{\operatorname{Rank}}
\nc{\smfrac}[2]{\mbox{$\frac{#1}{#2}$}}
\nc{\tr}{\operatorname{Tr}}
\nc{\ox}{\otimes}
\nc{\dg}{\dagger}
\nc{\dn}{\downarrow}
\nc{\cA}{\mathcal{A}}
\nc{\cB}{\mathcal{B}}
\nc{\cC}{\mathcal{C}}
\nc{\cD}{\mathcal{D}}
\nc{\cE}{\mathcal{E}}
\nc{\cF}{\mathcal{F}}
\nc{\cG}{\mathcal{G}}
\nc{\cH}{\mathcal{H}}
\nc{\cI}{\mathcal{I}}
\nc{\cJ}{\mathcal{J}}
\nc{\cK}{\mathcal{K}}
\nc{\cL}{\mathcal{L}}
\nc{\cM}{\mathcal{M}}
\nc{\cN}{\mathcal{N}}
\nc{\cO}{\mathcal{O}}
\nc{\cP}{\mathcal{P}}
\nc{\cR}{\mathcal{R}}
\nc{\cS}{\mathcal{S}}
\nc{\cT}{\mathcal{T}}
\nc{\cX}{\mathcal{X}}
\nc{\cZ}{\mathcal{Z}}
\nc{\csupp}{{\operatorname{csupp}}}
\nc{\qsupp}{{\operatorname{qsupp}}}
\nc{\var}{\operatorname{var}}
\nc{\rar}{\rightarrow}
\nc{\lrar}{\longrightarrow}
\nc{\polylog}{\operatorname{polylog}}
\nc{\id}{{\operatorname{id}}}
\nc{\RR}{{{\mathbb R}}}
\nc{\CC}{{{\mathbb C}}}
\nc{\FF}{{{\mathbb F}}}
\nc{\NN}{{{\mathbb N}}}
\nc{\ZZ}{{{\mathbb Z}}}
\nc{\PP}{{{\mathbb P}}}
\nc{\QQ}{{{\mathbb Q}}}
\nc{\UU}{{{\mathbb U}}}
\nc{\EE}{{{\mathbb E}}}
\nc{\qed}{{$\hfill\Box$}}
\def\hookto{\hookrightarrow}
\def\>{\rangle}
\def\<{\langle}
\def\be{\begin{equation}}
\def\ee{\end{equation}}
\def\bea{\begin{eqnarray}}
\def\eea{\end{eqnarray}}
\begin{document}

\title{Potential Capacities of Quantum Channels}

\author{Andreas Winter\thanks{Andreas Winter is with ICREA and
F\'{\i}sica Te\`{o}rica: Informaci\'{o} i Fen\`{o}mens Qu\`{a}ntics
Universitat Aut\`{o}noma de Barcelona,
ES-08193 Bellaterra (Barcelona), Spain. Email: andreas.winter@uab.cat.}
and Dong Yang\thanks{Dong Yang is with
F\'{\i}sica Te\`{o}rica: Informaci\'{o} i Fen\`{o}mens Qu\`{a}ntics
Universitat Aut\`{o}noma de Barcelona,
ES-08193 Bellaterra (Barcelona), Spain and
Laboratory for Quantum Information,
China Jiliang University,
Hangzhou, Zhejiang 310018, China. Email: dyang@cjlu.edu.cn.}
\thanks{Manuscript date 16 January 2016.}
}

\markboth{Winter and Yang: Potential Capacities of Quantum Channels}{Winter and Yang: Potential Capacities of Quantum Channels}
\maketitle

\begin{abstract}
We introduce \emph{potential capacities} of quantum channels in 
an operational way and provide upper bounds for these quantities, which
quantify the ultimate limit of usefulness of a channel for a given task
in the best possible context. 

Unfortunately, except for a few isolated cases, potential capacities
seem to be as hard to compute as their ``plain'' analogues. We thus
study upper bounds on some potential capacities:
For the classical capacity, we give an upper bound in terms of 
the entanglement of formation. 
To establish a bound for the quantum and private capacity, we first ``lift'' 
the channel to a Hadamard channel and then prove that the quantum and 
private capacity of a Hadamard channel is strongly additive, implying that 
for these channels, potential and plain capacity are equal. 
Employing these upper bounds we show that if a channel is noisy, however 
close it is to the noiseless channel, then it cannot be activated into 
the noiseless channel by any other contextual channel; this 
conclusion holds for all the three capacities. We also discuss the so-called
environment-assisted quantum capacity, because we are able to
characterize its ``potential'' version.
\end{abstract}

\begin{IEEEkeywords}
quantum channel, potential capacity, non-additivity, entanglement, Hadamard channel.
\end{IEEEkeywords}

\IEEEpeerreviewmaketitle


\section{Motivation}
\label{sec:intro}
\IEEEPARstart{T}{he} central problem in information theory is to find the capacity of a noisy 
channel for transmitting information faithfully. Depending on what type of information is to be sent, there are several capacities that can be defined for a quantum 
channel, among them the classical capacity \cite{Holevo,SW}, the quantum 
capacity \cite{Lloyd-Q,Shor,Devetak-PQ} and the private capacity \cite{Devetak-PQ,CaiWinterYeung-P}. 
In contrast to classical information theory, where the capacity is expressed 
by Shannon's famous single-letter formula, the status of quantum channel capacities
is much more complicated. 
The relevant quantities are known to be 
non-additive \cite{Hastings,DiVincenzo-Qnonadditive,Li,Smith-privacy}, which is at 
the center of interest in quantum information science, and the best known formula 
to calculate the capacities involves optimization over growing numbers of channel 
uses (``regularization''), where we have to perform an optimization over an infinite 
number variables, making a head-on numerical approach impossible,
cf.~\cite{Cubitt-et-al:reg,ES-reg}.
This makes it difficult to answer questions related to capacities, even some 
simple qualitative ones, such as whether, given a quantum channel, it is
useful to transmit quantum information.
Non-additivity in quantum Shannon theory is due to entanglement,
which has no classical counterpart. 
Employing entangled inputs for the channels, it is possible to transmit more 
information than just using product inputs. Entangled inputs between different 
quantum channels open the door to all kinds of effects that are
impossible in classical information theory. An extreme phenomenon is 
superactivation \cite{SmithYard}; there exist two quantum channels that cannot 
transmit quantum information when they are used individually, but can transmit
at positive rate when they are used together.

The phenomenon of superactivation, and more broadly of
super-additivity, implies that the capacity of a quantum channel does not adequately characterize the channel, since the utility of 
the channel depends on what other contextual channels are available. 
So it is natural to ask the following question: 
What is the maximum possible capability of a channel to transmit 
information when it is used in combination with any other contextual 
channels? We introduce the potential capacity to formally capture
this notion.

Superactivation can also be rephrased in an alternative way, that a 
zero-capacity channel becomes a positive-capacity one under the assistance of 
another zero-capacity side channel. Superactivation of quantum channel capacity shows that entangled inputs across different channel uses can provide a dramatic advantage, but more generally entangling different channel uses can give rise to superadditivity, i.e., an increase of the capacity above the sum of the channel capacities when the two channels are used jointly. Superactivation exhibits one regime of entanglement advantage, the regime of low
capacity. Could entanglement help in this sense at the other extreme? That is 
``Can a noisy channel, whose quantum capacity 
is $\leq \log d -\delta$, become perfectly noiseless under the assistance 
of a suitable zero-capacity side channel?'' 
Since it is difficult to characterize all the zero-capacity channels, it 
seems hard to answer this question. Encouraged by superactivation, 
one might guess that a noisy channel could behave like a noiseless channel 
by the assistance of a proper zero-capacity side channel. In this work, we 
will provide upper bounds on the potential capacities to exclude this 
possibility. In this sense, entanglement can help but cannot help too much.

This paper is structured as follows. 
In Section \ref{sec:defs} we introduce notation, definitions and state some 
basic known facts. 
In particular, we review the regularized formulas of three capacities (classical, 
quantum, and private capacity), and the results about additivity of degradable channels, 
furthermore the entanglement-assisted and the environment-assisted capacities.
In Section \ref{sec:potcap}, we introduce the notion of potential capacity 
and in Section \ref{sec:cases} evaluate it or give upper bounds for it, and 
prove that an imperfect channel cannot be activated into a perfect one. 
Finally we end with a summary and open questions in Section \ref{sec:open}.


\section{Notation and preliminaries}
\label{sec:defs}
We assume that all Hilbert spaces, denoted ${\cal H}$, 
are finite dimensional. 
Recall that a quantum state $\rho$ is a linear operator on ${\cal H}$ satisfying 
$\rho\ge 0$ and $\tr\rho=1$. A quantum channel is a completely positive and trace 
preserving (CPTP) linear map, from an input system $A$ to output system $B$ (we shall
generally use the same names for the underlying Hilbert spaces). 
From the Stinespring dilation theorem \cite{Stinespring}, we know that for a channel ${\cal N}$, there 
always exist an isometry $U:A \hookrightarrow B\otimes E$ for some environment 
space $E$, i.e.~$U^\dagger U = \1$, such that 
${\cal N}(\rho)=\tr_E U \rho U^\dagger$. 
The complementary channel of ${\cal N}$, which we denote ${\cal N}^c$, is the channel 
that maps from the input system $A$ to the environment system $E$, obtained by taking
the partial trace over system $B$ rather than the environment: 
${\cal N}^c(\rho)=\tr_B U \rho U^\dagger$. 
Since the Stinespring dilation is unique up to a change of basis of environment 
$E$, ${\cal N}^c$ is well-defined up to unitary operations on $E$.  
A quantum channel has another representation known as the Kraus representation: 
${\cal N}(\rho)=\sum_iK_i\rho K_i^{\dagger}$, where $K_i$ are called Kraus operators 
satisfying $\sum_iK_i^{\dagger}K_i=\1$.
Given a multipartite state $\rho^{ABC}$, we write $\rho^A = \tr_{BC} \rho^{ABC}$ for 
the corresponding reduced state. The von Neumann entropy is defined as 
$S(A)_\rho = S({\rho^A}) =-\tr\rho^A \log\rho^A$. The conditional von Neumann 
entropy of $A$ given $B$ is defined as $S(A|B)=S({\rho^{AB}})-S({\rho^B})$, 
the mutual information $I(A:B)=S({\rho^A})+S({\rho^B})-S({\rho^{AB}})$, 
and the conditional mutual information as
$I(A:B|C)=S(\rho^{AC})+S(\rho^{BC})-S({\rho^{ABC}})-S(\rho^C)$. 
When there is no ambiguity as to which state is being referred, we simply write 
$S(A)=S(\rho^A)$.

We review the regularization formulas of the three principal capacities: 
the classical, quantum, and private capacity.
 
The \emph{classical capacity} of a quantum channel is the rate at which one can 
reliably send classical information through a quantum channel, and is 
given by \cite{Holevo,SW},
\be
  C({\cN})=\lim_{n\to\infty} \frac1n \chi\left( {\cN}^{\otimes n} \right),
\ee
with the \emph{Holevo capacity} $\chi({\cal N})$ defined as
\be
  \chi({\cal N})=\max_{\{p_i,\phi_i\}} S\left(\sum_i p_i{\cal N}(\phi_i)\right)
                                       -\sum_i p_i S\bigl({\cal N}(\phi_i)\bigr).
\ee
Note the elementary rewriting of the Holevo capacity as follows,
known as the \emph{MSW identity} \cite{MSW}:
\be
  \label{eq:MSW}
  \chi({\cal N})=\max_{\rho^A} S(\rho^B)-E_F(\rho^{BE}),
\ee
where $\rho^{BE}=U\rho^A U^{\dagger}$, $U$ is the Stinespring 
isometry of ${\cal N}$, and $E_F(\rho^{BE})$ is the 
\emph{entanglement of formation} of the bipartite state $\rho^{BE}$ defined as
\be
  E_F(\rho^{BE})=\min \sum_i p_iS(\phi_i^B), \text{ s.t. } \rho^{BE} = \sum_i p_i \proj{\phi_i}^{BE}.
\ee

The \emph{quantum capacity} of a quantum channel is 
the rate at which one can reliably send quantum information 
through a quantum channel, and is given by \cite{Lloyd-Q,Shor,Devetak-PQ}, 
\be
Q({\cal N})=\lim_{n\to\infty} \frac1n Q^{(1)}({\cal N}^{\otimes n}),
\ee
with $Q^{(1)}({\cal N})$ defined as
\begin{equation}\begin{split}
Q^{(1)}({\cal N}) &= \max_{\ket{\phi}^{RA}} S\bigl({\cal N}(\phi^A)\bigr)
                                             -S\bigl(\id\otimes{\cal N}(\phi^{RA})\bigr),\\
                  &= \max_{\rho^A} (S(\rho^{B})-S(\rho^{E})),
\end{split}\end{equation}
where $\rho^{BE}=U\rho^A U^{\dagger}$, $U$ is the isometry of ${\cal N}$, and $S(\rho^{B})-S(\rho^{E})=S(\rho^{B})-S(\rho^{RB})$ is known as coherent information \cite{Schumacher,SN,BNS,BKN}.

The \emph{private capacity} of the quantum channel ${\cal N}$ is 
given by \cite{Devetak-PQ,CaiWinterYeung-P}
\be
  P({\cal N})=\lim_{n\to\infty} \frac1n P^{(1)}({\cal N}^{\otimes n}),
\ee
with $P^{(1)}({\cal N})$ defined as
\begin{align*}
  P^{(1)}({\cal N}) &=\max_{\{p_t,\rho_t\}} (I(T:B)-I(T:E)), \\ \text{ with respect to } 
  \rho^{TBE}        &=\sum_{t} p_t \proj{t}^T \otimes U \rho_t^A U^{\dagger}.
\end{align*}

Now we recall the definition of a degradable channel and its properties on 
quantum and private capacities.
\begin{definition}
  A channel ${\cal N}$ is called \emph{degradable}~\cite{Devetak-degradable} if
  it can simulate its complementary channel ${\cal N}^c$, i.e.~there is a 
  degrading CPTP map ${\cal D}$ such that ${\cal D} \circ {\cal N} = {\cal N}^c$.
\end{definition}

\begin{lemma}[Devetak/Shor~\cite{Devetak-degradable}] 
  If ${\cal N}$ and ${\cal M}$ are degradable channels, then their single-letter 
  quantum capacity is additive: 
  $Q^{(1)}({\cal N}\otimes{\cal M})=Q^{(1)}({\cal N})+Q^{(1)}({\cal M})$.
\end{lemma}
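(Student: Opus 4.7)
The $\ge$ direction is immediate from product inputs, so the task is the upper bound $Q^{(1)}(\cN\otimes\cM)\le Q^{(1)}(\cN)+Q^{(1)}(\cM)$. I would fix a pure optimizer $\ket{\phi}^{RAA'}$ and pass it through the joint Stinespring isometry to obtain a pure state $\ket{\psi}^{RBB'EE'}$, so that $Q^{(1)}(\cN\otimes\cM)=S(BB')_{\psi}-S(EE')_{\psi}$. The chain rule on both entropies splits this as
\[
S(BB')-S(EE')=\bigl[S(B)-S(E)\bigr]+\bigl[S(B'|B)-S(E'|E)\bigr],
\]
and the first bracket is by inspection the coherent information of $\cN$ on the reduced input $\phi^{A}$, hence at most $Q^{(1)}(\cN)$.

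The heart of the proof is to bound the second bracket using the degradability of $\cN$. Let $\cD_{\cN}$ be its degrading map and set $\tau=(\cD_{\cN})_{B}(\psi)$, a state in which a new system $\tilde E$ replaces $B$. Since $\cD_{\cN}\circ\cN=\cN^{c}$, every marginal of $\tau$ not involving $B$ coincides with the corresponding marginal of $\psi$ under the identification $\tilde E\leftrightarrow E$; in particular $\tau^{B'\tilde E}=\psi^{B'E}$ and $\tau^{E'\tilde E}=\psi^{E'E}$. The data-processing inequality, applied to a channel on the conditioning side, then gives $S(B'|B)_{\psi}\le S(B'|\tilde E)_{\tau}$, while $S(E'|E)_{\psi}=S(E'|\tilde E)_{\tau}$ by the marginal identity. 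Subtracting and cancelling the common $S(\tilde E)_{\tau}$ yields
\[
S(B'|B)_{\psi}-S(E'|E)_{\psi}\le S(B'E)_{\psi}-S(E'E)_{\psi}.
\]

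The remaining quantity is closed out by the degradability of $\cM$. I would decompose $S(B'E)-S(E'E)=[S(B')-S(E')]-[S(E|E')-S(E|B')]$ and use $E'=\cD_{\cM}(B')$ to invoke $I(E:B')\ge I(E:E')$, which translates to $S(E|B')\le S(E|E')$, making the subtracted bracket nonnegative and thus droppable. What remains, $S(B')-S(E')$, is the coherent information of $\cM$ on $\phi^{A'}$, hence at most $Q^{(1)}(\cM)$. Assembling the three estimates gives $Q^{(1)}(\cN\otimes\cM)\le Q^{(1)}(\cN)+Q^{(1)}(\cM)$. The main obstacle I anticipate is spotting the right chain-rule split together with the cross-use of degradabilities: a direct subadditivity estimate $S(BB')\le S(B)+S(B')$ goes the wrong way, because for degradable channels one has $I(B:B')\ge I(E:E')$ rather than the reverse, so one really does need to apply each channel's degrading map on the half of the split that still involves the \emph{other} channel's systems.
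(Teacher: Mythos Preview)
The paper does not actually prove this lemma: it is stated as a cited result of Devetak and Shor~\cite{Devetak-degradable} and no argument is given. So there is nothing in the paper to compare your proof against directly.

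That said, your argument is correct. The key identifications $\tau^{B'\tilde E}=\psi^{B'E}$ and $\tau^{E'\tilde E}=\psi^{E'E}$ follow cleanly once one views $RB'E'$ as a single reference for the $\cN$-isometry, so that $(\cD_{\cN})_{B}$ acting on $\psi^{RB'E'B}$ reproduces $\psi^{RB'E'E}$ by degradability; the analogous step with $\cD_{\cM}$ on the $B'$ register is likewise valid. The two uses of monotonicity of conditional entropy under processing of the conditioning system then give exactly the inequalities you claim, and the bookkeeping assembles to $Q^{(1)}(\cN\otimes\cM)\le Q^{(1)}(\cN)+Q^{(1)}(\cM)$.

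For context, the closest argument in the paper is the proof of Proposition~\ref{prop:H-strong-add} (strong additivity of $Q^{(1)}$ for Hadamard channels), which also splits $S(B_1B_2)-S(E_1E_2)$ via the chain rule and bounds $S(B_2|B_1)\le S(B_2|Y)$, $S(E_2|E_1)\ge S(E_2|Y)$ using the special structure of the Hadamard isometry. Your proof is the natural generalization of that idea to arbitrary degradable $\cN$, replacing the explicit ``copy of $E_1$'' register $Y$ by the abstract degrading map $\cD_{\cN}$, and then using degradability of $\cM$ (rather than nothing, as in the Hadamard case where $\cM$ is arbitrary) to close out the residual cross term.
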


\begin{lemma}[Smith~\cite{Smith-degradable-p}] 
  If a quantum channel ${\cal N}$ is a degradable channel, then its quantum
  capacity is equal to its private capacity, and both are given by the single-letter
  coherent information:
  $Q({\cal N})=P({\cal N})=Q^{(1)}({\cal N})=P^{(1)}({\cal N})$.
\end{lemma}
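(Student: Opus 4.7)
The plan is to first show $P^{(1)}(\cN)=Q^{(1)}(\cN)$ for any degradable $\cN$, and then bootstrap to the regularized statement using the Devetak/Shor lemma above together with the fact that $\cN^{\otimes n}$ is itself degradable (via the degrading map $\cD^{\otimes n}$).

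The inequality $P^{(1)}\ge Q^{(1)}$ holds with no degradability assumption and I would treat it first: fixing $\rho^A$ and any pure-state decomposition $\rho^A=\sum_tp_t\proj{\phi_t}$, I plug the ensemble $\{p_t,\phi_t^A\}$ into the definition of $P^{(1)}$; since each $\phi_t^A$ is pure, $S(\cN(\phi_t))=S(\cN^c(\phi_t))$, and the expression $I(T:B)-I(T:E)$ collapses to $S(B)-S(E)$, i.e., the coherent information of $\rho^A$. Maximizing over $\rho$ then gives $P^{(1)}(\cN)\ge Q^{(1)}(\cN)$.

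For the converse I would start from any ensemble $\{p_t,\rho_t^A\}$ achieving $P^{(1)}(\cN)$ and \emph{refine} each $\rho_t$ into a pure-state decomposition $\rho_t=\sum_s q_{s|t}\proj{\phi_{t,s}}$, producing a new classical-quantum state $\omega^{TSBE}$ with combined label $(T,S)$. By the chain rule, the change in private information is
\[
  [I(TS:B)-I(TS:E)]-[I(T:B)-I(T:E)]=I(S:B|T)-I(S:E|T).
\]
The degradability of $\cN$ yields $(\id_{TS}\otimes\cD)(\omega^{TSB})=\omega^{TSE}$, so the data-processing inequality for conditional mutual information (equivalently, strong subadditivity) forces this difference to be non-negative. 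Hence refinement never hurts, and by the previous paragraph the refined ensemble attains the value $I_c(\rho^A)\le Q^{(1)}(\cN)$. This refinement step is where degradability is genuinely used, and I expect it to be the main point that has to be written out carefully.

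To close the argument I would apply the single-letter equality at every block length: since $\cN^{\otimes n}$ is degradable, $P^{(1)}(\cN^{\otimes n})=Q^{(1)}(\cN^{\otimes n})$, and the previous lemma reduces the right-hand side to $nQ^{(1)}(\cN)$. Dividing by $n$ and letting $n\to\infty$ yields $P(\cN)=Q^{(1)}(\cN)$, and $Q(\cN)=Q^{(1)}(\cN)$ follows from the standard sandwich $Q^{(1)}(\cN)\le Q(\cN)\le P(\cN)$.
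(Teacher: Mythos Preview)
The paper does not actually prove this lemma: it is stated as a cited result from~\cite{Smith-degradable-p} with no proof given in the text. So there is nothing in the paper to compare your argument against.

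That said, your proposal is a correct reconstruction of the standard argument. The direction $P^{(1)}\ge Q^{(1)}$ via pure-state ensembles is right. The converse via refinement is the heart of Smith's proof: degradability gives $(\id_{TS}\otimes\cD)(\omega^{TSB})=\omega^{TSE}$, hence $I(S:E|T)\le I(S:B|T)$ by data processing, so the pure-state refinement of any optimal ensemble has private information at least as large, and for a pure-state ensemble the private information collapses to the coherent information of the average input. Your bootstrap is also fine: $\cN^{\otimes n}$ is degradable via $\cD^{\otimes n}$, so $P^{(1)}(\cN^{\otimes n})=Q^{(1)}(\cN^{\otimes n})$, and the Devetak/Shor lemma (stated just above in the paper) gives $Q^{(1)}(\cN^{\otimes n})=nQ^{(1)}(\cN)$; combined with $Q^{(1)}(\cN)\le Q(\cN)\le P(\cN)$ this yields all four equalities. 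One small wording issue: you write ``the refined ensemble attains the value $I_c(\rho^A)\le Q^{(1)}(\cN)$'' as if this were an upper bound on the refined value, but what you actually need (and have) is that the refined value \emph{equals} $I_c(\rho^A)$ and is $\ge$ the original value, so the original value is $\le Q^{(1)}(\cN)$. The logic is right; just tighten the phrasing when you write it out.
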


\medskip
We furthermore recall two other capacities: The 
\emph{entanglement-assisted classical capacity} of ${\cal N}$ \cite{BSST}, which is the capacity for transmitting classical information through the channel with the help of unlimited prior entanglement shared between the sender and the receiver and 
which is given by the simple and beautiful formula
\be
  C_E(\cN) = \max_{\rho^A} I(R:B),
\ee
where $I(R:B) = S(\rho^R)+S(\rho^B)-S(\rho^{RB})$ is the quantum mutual
information of the state $\rho^{RB} = (\id\otimes{\cal N})(\proj{\phi}^{RA})$,
with a purification $\ket{\phi}^{RA}$ of $\rho^A$.
And the \emph{environment-assisted quantum capacity}, which refers to
active feed-forward of classical information from the channel environment
$E$ to the receiver $B$ \cite{SVW,Winter-envass}, is given by
\be
  Q_A({\cal N}) = \max_{\rho^A} \min\Bigl\{ S(\rho^A),S\bigl({\cal N}(\rho^A)\bigr) \Bigr\}.
\ee


\section{Potential capacities}
\label{sec:potcap}
Notice that the formulas for $C$, $P$ and $Q$ all are regularized expressions
due to the non-additivity of their respective single-letter quantities,
$\chi$, $P^{(1)}$ and $Q^{(1)}$.  

We call a real function $f({\cal N})$ on the set of channels \emph{weakly-additive} 
if $f({\cal N}^{\otimes n}) = nf({\cal N})$ for all $n\ge 1$, 
and \emph{strongly-additive} if $f({\cal N}\otimes {\cal M})= f({\cal N})+f({\cal M})$ 
for any channels ${\cal N}$ and ${\cal M}$. 
Obviously, if $f$ is strongly-additive, then it is also weakly-additive but not 
vice versa; and example of this is given by the environment-assisted capacity
$Q_A({\cal N})$. 
Furthermore, for fixed $f$, we call a channel ${\cal N}$ \emph{strongly-additive},
if for all other channels ${\cal M}$,
$f({\cal N}\otimes {\cal M})= f({\cal N})+f({\cal M})$.

From their expression as regularizations, or directly from the definition,
one can directly deduce that the capacities $C({\cal N})$, $Q({\cal N})$ and 
$P({\cal N})$ are weakly-additive. 
Furthermore, it is known that neither $Q({\cal N})$ nor $P({\cal N})$ are 
strongly-additive; $C({\cal N})$ is believed to be not strongly-additive, though this has
not been proved so far. The single-letter quantities $\chi({\cal N})$, 
$Q^{(1)}({\cal N})$, $P^{(1)}({\cal N})$ are not even weakly-additive. 

Due to their non-additivity, the capability to transmit information through a quantum channel does not only depend on the channel itself, but also on any contextual channel with which it can be combined. So the standard capacity cannot uniquely characterize the utility of the channel. It is natural to consider the maximal possible capability to transmit information when it is used in combination with any other contextual channels. We introduce the potential capacity to describe this notion. It describes the potential capability that can be activated by a proper contextual channel. Since the three capacities share the same property, we define the notion in a unified way.

In the following definitions, we assume a super-additive function $f$,
i.e.~$f({\cal N}\otimes {\cal M}) \geq f({\cal N})+f({\cal M})$ 
for any channels ${\cal N}$ and ${\cal M}$, so that the regularization
$f^{(\infty)}$ is given by
\be
\label{eq:regularization}
  f^{(\infty)}({\cal N}) = \sup_n \frac1n f({\cal N}^{\otimes n})
                         = \lim_{n\rightarrow\infty} \frac1n f({\cal N}^{\otimes n}).
\ee
By its definition, $f^{(\infty)}$ is always weakly-additive, and 
$f({\cal N}) \leq f^{(\infty)}({\cal N})$.

\begin{definition}
\label{def:f-pot}
For a channel ${\cal N}$, the \emph{potential capacity} associated to $f$
is defined as 
\be
\label{eq:pot}
  f^{(\infty)}_p({\cal N}) 
     := \sup_{\cal M} \left[ f^{(\infty)}({\cal N}\otimes{\cal M})-f^{(\infty)}({\cal M}) \right],
\ee
where $f^{(\infty)}({\cal N})$ is the regularization of $f$.

Similarly, the \emph{potential single-letter capacity} is defined as
\be
\label{eq:one-pot}
  f^{(1)}_p({\cal N}) := \sup_{\cal M} \left[ f({\cal N}\otimes{\cal M})-f({\cal M}) \right],
\ee
where $f^{(1)}({\cal N}) = f({\cal N})$ is the single-letter function.

This notion has been introduced before in~\cite[Sec.~VII]{Smith}, for 
the case of $f = Q^{(1)}$, under the name of ``quantum value added capacity.''
\end{definition}

\medskip
Note that we have (always assuming super-additivity of $f$)
\[
  f^{(1)}_p({\cal N}) = f({\cal N}) \text{ iff } {\cal N} \text{ is strongly additive.}
\]

Eq. (\ref{eq:pot}) is difficult to calculate because of the unlimited dimension of the contextual channel $\cM$ and the regularization function $f^{\infty}$. Eq. (\ref{eq:one-pot}) looks simpler but still suffers from the unlimited dimension problem. So we would like to provide upper bounds for them. Before that we will show that the notion ``potential'' is intrinsically sub-additive and a little surprising fact: $f^{(\infty)}_p({\cal N})\le f^{(1)}_p({\cal N})$ though $f^{(\infty)}({\cal N})\ge f^{(1)}({\cal N})$.
 
\begin{lemma} 
  \label{lemma:pot-subadd}
  For any super-additive $f$, both $f^{(1)}_p({\cal N})$ and $f_p^{(\infty)}({\cal N})$ are
  sub-additive, i.e.
  \begin{align*}
    f^{(1)}_p({\cal N}\otimes{\cal M})      &\le f^{(1)}_p({\cal N})+f^{(1)}_p({\cal M}), \\
    f^{(\infty)}_p({\cal N}\otimes{\cal M}) &\le f^{(\infty)}_p({\cal N})+f^{(\infty)}_p({\cal M}).
  \end{align*}
\end{lemma}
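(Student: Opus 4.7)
The plan is a simple telescoping argument that works uniformly for the single-letter and the regularized versions, so I would handle both cases in parallel, writing $g$ for either $f$ or $f^{(\infty)}$ (both are super-additive, the latter by construction from \eqref{eq:regularization}) and $g_p$ for the associated potential quantity.

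First I would unfold the left-hand side from Definition~\ref{def:f-pot}:
\[
  g_p(\cN\otimes\cM) = \sup_{\cK} \bigl[ g(\cN\otimes\cM\otimes\cK) - g(\cK) \bigr].
\]
Then, for an arbitrary contextual channel $\cK$, I would insert the term $g(\cM\otimes\cK)$ to split the bracket in two:
\[
  g(\cN\otimes\cM\otimes\cK)-g(\cK)
  = \bigl[ g(\cN\otimes(\cM\otimes\cK)) - g(\cM\otimes\cK) \bigr] + \bigl[ g(\cM\otimes\cK) - g(\cK) \bigr].
\]
The first bracket is bounded above by $g_p(\cN)$, because $\cM\otimes\cK$ is a legitimate choice of contextual channel in the supremum defining $g_p(\cN)$; the second bracket is bounded above by $g_p(\cM)$ by the same reasoning applied to $\cK$. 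Taking the supremum over $\cK$ on the left yields the desired inequality
\[
  g_p(\cN\otimes\cM) \le g_p(\cN) + g_p(\cM).
\]

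Applying this with $g=f$ gives the first claim, and with $g=f^{(\infty)}$ the second. The only thing to double-check is that $f^{(\infty)}$ is itself super-additive (so that using it as ``$g$'' is consistent with the definition of $g_p$), but this follows from the standard argument: $f^{(\infty)}(\cN\otimes\cM) \ge \frac{1}{n} f((\cN\otimes\cM)^{\otimes n}) \ge \frac{1}{n} f(\cN^{\otimes n}) + \frac{1}{n} f(\cM^{\otimes n})$, and letting $n\to\infty$ gives super-additivity of $f^{(\infty)}$. There is essentially no obstacle here; the content of the lemma is just the observation that ``difference-from-context'' quantities telescope, and the main point to get right in the write-up is simply that the intermediate contextual channel $\cM\otimes\cK$ is allowed in the supremum.
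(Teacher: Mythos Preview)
Your proof is correct and is essentially identical to the paper's: both insert the intermediate term $g(\cM\otimes\cK)$ (the paper writes $\cT$ instead of $\cK$), telescope, bound each bracket by the appropriate potential quantity, and then take the supremum. Your unified treatment via $g\in\{f,f^{(\infty)}\}$ and the explicit check that $f^{(\infty)}$ is super-additive are minor stylistic additions, not a different approach.
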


\begin{proof}
We prove the claim for $f^{(1)}_p({\cal N})$; the proof for 
$f^{(\infty)}_p({\cal N})$ is similar. Namely, for an arbitrary channel ${\cal T}$,
\[\begin{split}
 &\quad~ f^{(1)}({\cal N}\otimes{\cal M}\otimes{\cal T})-f^{(1)}({\cal T}) \\
    &=f^{(1)}({\cal N}\otimes{\cal M}\otimes{\cal T})-f^{(1)}({\cal M}\otimes{\cal T})\\
                                         & \quad~+f^{(1)}({\cal M}\otimes{\cal T})-f^{(1)}({\cal T}),\\
    &\le \sup_{\cal S} \left[f^{(1)}({\cal N}\otimes{\cal S})-f^{(1)}({\cal S}) \right]\\
         &\quad~+\sup_{\cal S} \left[f^{(1)}({\cal M}\otimes{\cal S})-f^{(1)}({\cal S}) \right], \\
    &=   f^{(1)}_p({\cal N})+f^{(1)}_p({\cal M}).
\end{split}\]
Maximization over ${\cal T}$ concludes the proof.
\end{proof}

\begin{lemma}
  \label{lemma:chain}
  The potential capacity is upper bounded by the potential single-letter capacity, 
  more precisely
  \[
    f^{(1)}({\cal N}) \le f^{(\infty)}({\cal N}) \le f^{(\infty)}_p({\cal N}) \le f^{(1)}_p({\cal N}).
  \]
\end{lemma}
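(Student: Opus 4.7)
The plan is to verify the chain in three pieces, from the easy end to the nontrivial one.

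The inequality $f^{(1)}({\cal N}) \le f^{(\infty)}({\cal N})$ is exactly the remark made right after (\ref{eq:regularization}): $f^{(\infty)}$ is defined as the supremum of $\frac{1}{n} f({\cal N}^{\otimes n})$, whose $n=1$ term is already $f({\cal N}) = f^{(1)}({\cal N})$.

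For $f^{(\infty)}({\cal N}) \le f^{(\infty)}_p({\cal N})$ I would first lift super-additivity of $f$ to super-additivity of $f^{(\infty)}$. Using $({\cal N}\otimes{\cal M})^{\otimes n} = {\cal N}^{\otimes n}\otimes{\cal M}^{\otimes n}$, super-additivity of $f$ yields
\[
  \tfrac{1}{n}\,f\bigl(({\cal N}\otimes{\cal M})^{\otimes n}\bigr)
    \ge \tfrac{1}{n}\,f\bigl({\cal N}^{\otimes n}\bigr) + \tfrac{1}{n}\,f\bigl({\cal M}^{\otimes n}\bigr),
\]
so letting $n\to\infty$ gives $f^{(\infty)}({\cal N}\otimes{\cal M}) \ge f^{(\infty)}({\cal N}) + f^{(\infty)}({\cal M})$. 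Hence every ${\cal M}$ contributes at least $f^{(\infty)}({\cal N})$ to the supremum in (\ref{eq:pot}), which is the desired bound.

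The main step is $f^{(\infty)}_p({\cal N}) \le f^{(1)}_p({\cal N})$, which I would prove by a telescoping argument at the single-letter level. Fix an arbitrary ${\cal M}$ and $n \ge 1$, and write
\[
  f\bigl({\cal N}^{\otimes n}\otimes{\cal M}^{\otimes n}\bigr) - f\bigl({\cal M}^{\otimes n}\bigr)
    = \sum_{k=1}^{n} \bigl[ f\bigl({\cal N}^{\otimes k}\otimes{\cal M}^{\otimes n}\bigr) - f\bigl({\cal N}^{\otimes (k-1)}\otimes{\cal M}^{\otimes n}\bigr) \bigr].
\]
The $k$-th summand has the form $f({\cal N}\otimes{\cal S}_k) - f({\cal S}_k)$ with ${\cal S}_k := {\cal N}^{\otimes(k-1)}\otimes{\cal M}^{\otimes n}$, so it is bounded by $f^{(1)}_p({\cal N})$ by definition (\ref{eq:one-pot}). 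Summing and dividing by $n$ gives $\tfrac{1}{n} f(({\cal N}\otimes{\cal M})^{\otimes n}) - \tfrac{1}{n} f({\cal M}^{\otimes n}) \le f^{(1)}_p({\cal N})$; letting $n\to\infty$ yields $f^{(\infty)}({\cal N}\otimes{\cal M}) - f^{(\infty)}({\cal M}) \le f^{(1)}_p({\cal N})$, and taking the supremum over ${\cal M}$ closes the argument.

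I do not foresee a genuine obstacle here. The only delicate point, really just bookkeeping, is that in the final step we replace the limit of a difference by the difference of two limits; this is legitimate because super-additivity plus Fekete's lemma, as recorded in (\ref{eq:regularization}), guarantees that $\tfrac{1}{n}f(({\cal N}\otimes{\cal M})^{\otimes n})$ and $\tfrac{1}{n}f({\cal M}^{\otimes n})$ each converge individually (and are finite in the finite-dimensional setting assumed throughout).
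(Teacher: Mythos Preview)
Your proof is correct and follows essentially the same route as the paper. The only cosmetic difference is in the third inequality: the paper bounds $f({\cal N}^{\otimes n}\otimes{\cal M}^{\otimes n}) - f({\cal M}^{\otimes n}) \le f^{(1)}_p({\cal N}^{\otimes n})$ and then invokes the sub-additivity Lemma~\ref{lemma:pot-subadd} to get $f^{(1)}_p({\cal N}^{\otimes n}) \le n\,f^{(1)}_p({\cal N})$, whereas your telescoping argument inlines that lemma's proof directly; the content is the same.
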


\begin{proof} 
The first ``$\le$'' comes from Eq. (\ref{eq:regularization}) by taking $n=1$ and the second ``$\le$'' from Eq. (\ref{eq:pot}) by taking $\cM$ as a fixed state channel.  
For the third ``$\le$'', consider the following chain of inequalities:
\[\begin{split}
 &\quad~f^{(\infty)}({\cal N}\otimes{\cal M})-f^{(\infty)}({\cal M})\\
      &=   \lim_{n\to\infty} \frac1n f({\cal N}^{\otimes n}\otimes{\cal M}^{\otimes n})
              -\lim_{n\to\infty} \frac1n f({\cal M}^{\otimes n}),                      \\
      &=   \lim_{n\to\infty} \frac1n \left[ f({\cal N}^{\otimes n}\otimes{\cal M}^{\otimes n})
                                             -f({\cal M}^{\otimes n}) \right],         \\
      &\le \lim_{n\to\infty} \frac1n f^{(1)}_p({\cal N}^{\otimes n}),                  \\
      &\le \lim_{n\to\infty} \frac1n n f^{(1)}_p({\cal N})
       =    f^{(1)}_p({\cal N}),
\end{split}\]
where the first inequality uses the definition of the potential single-shot 
capacity and the second one the sub-additivity.

Hence we have
\[
  f^{(1)}({\cal N}) \le f^{(\infty)}({\cal N}) \le f^{(\infty)}_p({\cal N}) \le f^{(1)}_p({\cal N}).
\]
\end{proof}

\medskip
\begin{remark} 
\normalfont
Notice that all capacities and their single-letter formulations are
super-additive, and that the single-letter form is a lower bound of 
the regularized form. However, their ``potential'' counterparts have the 
reverse relation; this was glimpsed in \cite{Smith} without any further 
investigation.
\end{remark}


\section{Five concrete potential capacities}
\label{sec:cases}
Now we can turn to five concrete examples. We start with
the entanglement-assisted capacity, which presents a trivial case:
Namely, $C_E$ is known to be strongly-additive \cite{AdamiCerf}, i.e., for all channels
${\cal N}$ and ${\cal M}$, 
$C_E({\cal N}\otimes{\cal M}) = C_E({\cal N})+C_E({\cal M})$.
Thus, $C_E$ equals its own regularization and in turn its own
potential capacity:
\[
  C_E({\cal N}) = C_E^{(\infty)}({\cal N}) = (C_E)_p({\cal N}).
\]

The next subsection presents the slightly more interesting
case of $Q_A$, which is not additive, but it has a single-letter formula. 
For this case we are still able to evaluate $(Q_A)_p({\cal N})$
in a simple single-letter formula, but for the subsequent $C$, $P$ and $Q$ we will
only be able to give upper bounds.

\subsection{Potential environment-assisted capacity}
\label{subsec:Q_A}
There are two types of channels ${\cal M}$ with $Q_A({\cal M}) = 0$, 
which we will use to activate a given ${\cal N}$, on one hand,
those with one-dimensional input system, on the other those with
one-dimensional output system. Their Stinespring isometries are
\begin{align*}
  &V_1 : \mathbb{C} \longrightarrow B' \otimes E', \quad~~1 \longmapsto     \ket{\phi}^{B'E'},\\
  &V_2 : A'  \longrightarrow \mathbb{C} \otimes E', \quad \ket{\psi} \longmapsto     1^{B'} \otimes (W_2\ket{\psi})^{E'},
\end{align*}
where $W_2$ is an isometry. Using these, we show the following simple result:

\begin{theorem}
For any channel ${\cal N}: A\rightarrow B$,
\[
\begin{split}
  (Q_A)_p({\cal N}) &= \max_{\rho^A} \max \Big\{ S(\rho^A), S\bigl({\cal N}(\rho^A)\bigr) \Bigr\},\\
                    &= \max \left\{ \log|A|, \max_{\rho^A} S\bigl({\cal N}(\rho^A)\bigr) \right\}.
\end{split}
\]
\end{theorem}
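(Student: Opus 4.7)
The plan is to establish the theorem by proving matching lower and upper bounds; the equality of the two right-hand sides in the statement follows from $\max_{\rho^A} S(\rho^A)=\log|A|$ and exchanging the outer with the inner $\max$.

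For the lower bound, I would plug in the two families of trivial-$Q_A$ channels introduced just before the statement, both of which have $Q_A(\cM)=0$, so that $(Q_A)_p(\cN)\ge Q_A(\cN\otimes\cM)$. With $\cM_1$ of Stinespring type $V_1$ and $\ket{\phi}^{B'E'}$ maximally entangled on arbitrarily many ancilla qubits, Bob in $\cN\otimes\cM_1$ receives $\cN(\sigma^A)\otimes\phi^{B'}$; feeding the maximally mixed $\sigma^A$ while letting $S(\phi^{B'})$ grow makes $Q_A$'s min equal to $S(\sigma^A)=\log|A|$. With $\cM_2$ of type $V_2$ and product input $\rho^A\otimes\tau^{A'}$ where $\tau^{A'}$ is maximally mixed on a sufficiently large ancilla, Bob sees $\cN(\rho^A)$ while $S(\sigma^{AA'})\ge S(\tau^{A'})$ is made arbitrarily large, so the min equals $S(\cN(\rho^A))$. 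Optimizing over $\rho^A$ yields $(Q_A)_p(\cN)\ge \max\{\log|A|,\max_{\rho^A} S(\cN(\rho^A))\}$.

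For the upper bound, I would argue pointwise. Fix any $\cM:A'\to B'$ and any input $\sigma^{AA'}$ with marginals $\sigma^A,\sigma^{A'}$. Subadditivity of von Neumann entropy gives
\[
  S(\sigma^{AA'})\le S(\sigma^A)+S(\sigma^{A'}), \quad S\bigl((\cN\otimes\cM)(\sigma^{AA'})\bigr)\le S(\cN(\sigma^A))+S(\cM(\sigma^{A'})).
\]
Combined with the elementary real-number inequality $\min\{a+c,b+d\}\le \min\{c,d\}+\max\{a,b\}$ (checked by splitting into the cases $c\le d$ and $c>d$), the $Q_A$-expression at input $\sigma^{AA'}$ is at most $\min\{S(\sigma^{A'}),S(\cM(\sigma^{A'}))\}+\max\{S(\sigma^A),S(\cN(\sigma^A))\}$, which is in turn bounded by $Q_A(\cM)+\max\{\log|A|,\max_{\rho^A} S(\cN(\rho^A))\}$. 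Maximizing over $\sigma^{AA'}$ and taking the sup over $\cM$ gives the matching upper bound on $(Q_A)_p(\cN)$.

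The only non-trivial step is spotting the min-max inequality that decouples the tensor product into separate optimizations on $\cN$ and $\cM$; beyond that, the proof reduces to subadditivity of entropy and the definition of $Q_A$. A minor bookkeeping point is that $(Q_A)_p$ is well-defined here because $Q_A$ is weakly additive, hence $Q_A^{(\infty)}=Q_A$.
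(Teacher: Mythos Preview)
Your proof is correct and follows essentially the same approach as the paper. The lower bound uses the two zero-$Q_A$ auxiliary channels exactly as the paper does, and your upper bound is the same argument up to the order of steps: the paper first passes to $\max\{S(A|A'),S(B|B')\}$ via the inequality $\min\{a,b\}-\min\{c,d\}\le\max\{a-c,b-d\}$ and then invokes subadditivity to bound the conditional entropies, whereas you apply subadditivity first and then your $\min\{a+c,b+d\}\le\max\{a,b\}+\min\{c,d\}$ rearrangement---these are equivalent manipulations.
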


\begin{proof}
First, for ``$\geq$'': By tensoring with a channel ${\cal M}: A'\rightarrow B'$
 of the above type having zero environment-assisted capacity, i.e.
either ${\cal M}_1$ where the only input state has zero
entropy, or ${\cal M}_2$ where the only output state has
zero entropy. In this way we can bump up either the output
entropy $S\bigl({\cal N}\otimes{\cal M}(\rho^{AA'})\bigr)$, 
or the input entropy $S(\rho^{AA'})$ by an arbitrary amount,
without changing the respective other. Thus indeed,
\[
  (Q_A)_p({\cal N}) \geq Q_A({\cal N}\otimes{\cal M})
                    \geq \max \Big\{ S(\rho^A), S\bigl({\cal N}(\rho^A)\bigr) \Bigr\}.
\]

In the other direction, consider an arbitrary channel ${\cal M}$. Then we have,
\[\begin{split}
  (Q_A)_p({\cal N}) &= \sup_{{\cal M}} Q_A({\cal N}\otimes{\cal M}) - Q_A({\cal M}), \\
                    &\leq \sup_{{\cal M}} \max_{\rho^{AA'}} \left(
                          \min \Bigl\{ S(\rho^{AA'}),S\bigl({\cal N}\otimes{\cal M}(\rho^{AA'})\bigr) \Bigr\} \right. \\
                                     &~~~~~~~~~~~~~~~ \left.+ \max \Bigl\{ -S(\rho^{A'}),-S\bigl({\cal M}(\rho^{A'})\bigr) \Bigr\}
                                                                                                     \right),\\
                    &\leq \sup_{{\cal M}} \max_{\rho^{AA'}} 
                          \max \bigl\{ S(A|A'), S(B|B') \bigr\},                     \\
                    &\leq \sup_{{\cal M}} \max_{\rho^{A}} 
                          \max \bigl\{ S(A), S(B) \bigr\},
\end{split}\]
and we are done.
\end{proof}

\subsection{Potential classical capacity}
\label{subsec:C}
In this section, we study the potential classical capacity
and its relation to the single-letter Holevo capacity, and
most importantly
establish an upper bound via a specific entanglement measure.
This bound is used to prove that an imperfect quantum channel cannot be activated 
into a perfect one by any other contextual channel. 

\begin{definition} 
Specializing Definition~\ref{def:f-pot}
to the case $f\equiv C$, we obtain the \emph{potential classical capacity}
\be
  C_p({\cal N}) = \sup_{\cal M} \bigl[ C({\cal N}\otimes{\cal M})-C({\cal M}) \bigr],
\ee
and likewise the \emph{potential Holevo capacity}
\be
  \chi_p({\cal N}) = \sup_{\cal M} \bigl[ \chi({\cal N}\otimes{\cal M})-\chi({\cal M}) \bigr].
\ee
\end{definition}

By Lemma~\ref{lemma:chain}, we have
\be
  \chi({\cal N})\le C({\cal N}) \le C_p({\cal N}) \le \chi_p({\cal N}).
\ee
To give non-trivial bounds on $\chi_p({\cal N})$, we invoke the following
previous result.

\begin{lemma}[Yang \emph{et al.}~\cite{Yang}] 
\label{lemma:yang}
For a mixed four-partite state $\rho^{B_1B_2E_1E_2}$,
\be
  \label{se:main}
  E_F(\rho^{B_1B_2:E_1E_2}) \ge G(\rho^{B_1:E_1}) + E_F(\rho^{B_2:E_2}),
\ee
where the fuction $G(\rho^{BE})$ is defined as
\begin{align}
  G(\rho^{B:E}) := \min_{\{p_i,\rho^{BE}_i\}} \sum_i p_i C_{\leftarrow}(\rho^{BE}_i), \\ \text{with~}
  C_{\leftarrow}(\sigma^{BE}) = S(\sigma_B) - \min_{\{P_j\}} r_j S(\sigma^B_j),\nonumber
\end{align}
where $\{P_j\}$ ranges over POVMs on $E$, i.e.~$P_j\geq 0$ and $\sum_j P_j=\1$,
$r_j= \tr (\1\otimes P_j)\sigma^{BE}$, and 
$\sigma^B_j = \frac{1}{r_j} \tr_E (\1\otimes P_j)\sigma_{BE}$.

Furthermore, $G(\rho^{B:E})$ is \emph{faithful}, 
meaning $G(\rho^{B:E})=0$ iff $\rho^{BE}$ is separable.
\end{lemma}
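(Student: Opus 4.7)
The plan is to reduce the inequality to the case of pure four-partite states $\ket{\psi}^{B_1B_2E_1E_2}$ via the convex-roof nature of both $E_F$ and $G$, and then handle the pure case by splitting the entropy $S(B_1B_2)$ through the Koashi--Winter identity and bounding the two resulting pieces separately.

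First I would take an optimal pure-state decomposition $\rho^{B_1B_2E_1E_2} = \sum_k p_k \proj{\psi_k}$ with $E_F(\rho^{B_1B_2:E_1E_2}) = \sum_k p_k S(\psi_k^{B_1B_2})$. Because $G$ is a convex roof over arbitrary ensembles (not only pure ones) and $E_F$ is a convex roof of the marginal entropy, both are convex; the induced ensembles on the two marginals yield $G(\rho^{B_1:E_1}) \le \sum_k p_k G(\psi_k^{B_1E_1})$ and $E_F(\rho^{B_2:E_2}) \le \sum_k p_k E_F(\psi_k^{B_2E_2})$. Summing term-wise, the lemma will follow once I prove the pure-state inequality $S(\psi^{B_1B_2}) \ge G(\psi^{B_1E_1}) + E_F(\psi^{B_2E_2})$ for each pure $\ket{\psi}^{B_1B_2E_1E_2}$.

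For the pure case I would apply the Koashi--Winter identity on the tripartition $(B_1B_2)(E_1)(E_2)$ to split
\[
  S(B_1B_2) = E_F(\psi^{B_1B_2:E_1}) + C_{\leftarrow}(\psi^{B_1B_2:E_2}),
\]
and then match the two pieces to the right-hand side via (a) $E_F(\psi^{B_1B_2:E_1}) \ge G(\psi^{B_1:E_1})$ and (b) $C_{\leftarrow}(\psi^{B_1B_2:E_2}) \ge E_F(\psi^{B_2:E_2})$. Inequality (a) is immediate: $E_F$ is monotone under a local partial trace, so $E_F(\psi^{B_1B_2:E_1}) \ge E_F(\psi^{B_1:E_1})$, and $G \le E_F$ by definition (the convex roof of $C_{\leftarrow}$ is minimised over a strictly larger family of decompositions than the pure ones defining $E_F$). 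Faithfulness of $G$ also follows directly: if $\rho^{BE}$ is separable with $\rho = \sum_i p_i \sigma_i^B \otimes \tau_i^E$, each product component has $C_{\leftarrow} = 0$, so $G(\rho) = 0$; conversely, $G(\rho) = 0$ forces a decomposition with vanishing average $C_{\leftarrow}$, and since $C_{\leftarrow}\ge 0$ with equality only on product states, $\rho$ must be separable.

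The main obstacle is (b), $C_{\leftarrow}(\psi^{B_1B_2:E_2}) \ge E_F(\psi^{B_2:E_2})$. The two quantities live on disjoint marginals ($B_1B_2E_2$ versus $B_2E_2$), so no single Koashi--Winter identity directly links them; equivalently, after substituting Koashi--Winter on the left, one needs the superadditivity-like statement $S(B_1B_2)\ge E_F(\psi^{B_1B_2:E_1}) + E_F(\psi^{B_2:E_2})$ for the pure four-partite state. My plan for this part is to pick the optimal POVM on $E_2$ that realizes $C_{\leftarrow}(\psi^{B_1B_2:E_2})$, to translate the induced post-measurement ensemble on $B_1B_2E_1$ (which is pure because $\ket\psi$ is pure) into a pure-state ensemble for $\psi^{B_2E_2}$ via the Hughston--Jozsa--Wootters correspondence, and to chain these through an intermediate Koashi--Winter step so that the $B_1$-dependence is absorbed into the classical correlation without being double-counted. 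This accounting is delicate because one-way classical correlations, unlike $E_F$, need not vanish on separable states, and making the slack match exactly the superadditivity one wants is the technical heart of the argument.
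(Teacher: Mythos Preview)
The paper does not prove this lemma; it is quoted as a result from Yang \emph{et al.}~\cite{Yang}, so there is no in-paper argument to compare against. I therefore comment only on the correctness of your plan.

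Your reduction to pure four-partite states and your inequality (a) are fine. The problem is inequality (b),
\[
  C_{\leftarrow}(\psi^{B_1B_2:E_2}) \;\ge\; E_F(\psi^{B_2:E_2})
  \qquad(\text{for pure } \ket{\psi}^{B_1B_2E_1E_2}),
\]
which is \emph{false}. Specialize to $B_1$ trivial: then (b) asserts $C_{\leftarrow}(\rho^{B_2:E_2}) \ge E_F(\rho^{B_2:E_2})$ for an arbitrary bipartite state $\rho^{B_2E_2}$ (with $E_1$ serving as purifying system). By Koashi--Winter applied to the purification, $C_{\leftarrow}(B_2:E_2)=S(B_2)-E_F(B_2:E_1)$, so your (b) is equivalent to the monogamy inequality $E_F(B_2:E_2)+E_F(B_2:E_1)\le S(B_2)$ for pure tripartite states. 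This is well known to fail: for the three-qubit $W$ state $\ket{W}^{B_2E_1E_2}=\tfrac{1}{\sqrt3}(\ket{100}+\ket{010}+\ket{001})$ one has $S(B_2)=h(1/3)\approx 0.918$, while by symmetry $E_F(B_2:E_1)=E_F(B_2:E_2)\approx 0.550$, so the sum $\approx 1.10$ exceeds $S(B_2)$; equivalently $C_{\leftarrow}(B_2:E_2)\approx 0.368 < E_F(B_2:E_2)\approx 0.550$.

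The underlying issue is that your Koashi--Winter split on $(B_1B_2)(E_1)(E_2)$ overspends in step (a): bounding $E_F(B_1B_2:E_1)\ge G(B_1:E_1)$ throws away exactly the slack that step (b) would need (in the $B_1$-trivial example the left side can be large while the right side is $0$). A route that avoids this pitfall is to control $G(\psi^{B_1:E_1})$ directly via $C_{\leftarrow}$ of the marginals $\psi_k^{B_1E_1}$, using an ensemble induced by a measurement on $B_2E_2$ (and Koashi--Winter on the tripartition that isolates $B_1$ rather than $B_1B_2$); then the residual term matches $E_F(B_2:E_2)$ without requiring the false monogamy of $E_F$.
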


\begin{theorem} 
For a channel ${\cal N}$ with Stinespring isometry $U$,
\be
  \chi_p({\cal N}) \le \max_{\rho^A} \bigl[ S(\rho^B)-G(\rho^{BE}) \bigr],
\ee
where $\rho^{BE} = U \rho^A U^{\dagger}$.
\end{theorem}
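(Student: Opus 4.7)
The plan is to combine the MSW identity (\ref{eq:MSW}) with the additivity-type lower bound for $E_F$ provided by Lemma~\ref{lemma:yang}. Fix an arbitrary channel $\mathcal{M}\colon A'\to B'$ with Stinespring isometry $V\colon A'\hookrightarrow B'\otimes E'$, and let $\rho^{AA'}$ be an input that attains the maximum in the MSW formula for $\chi(\mathcal{N}\otimes\mathcal{M})$. Writing $W = U\otimes V$ and $\rho^{BB'EE'} = W\rho^{AA'}W^\dagger$, the MSW identity gives
\[
   \chi(\mathcal{N}\otimes\mathcal{M}) = S(\rho^{BB'}) - E_F(\rho^{BB':EE'}).
\]

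Next I would attack each term separately. For the entropy, subadditivity yields $S(\rho^{BB'})\le S(\rho^B)+S(\rho^{B'})$, where $\rho^B$ and $\rho^{B'}$ are the marginals obtained from the reduced inputs $\rho^A=\tr_{A'}\rho^{AA'}$ and $\rho^{A'}=\tr_A\rho^{AA'}$. For the entanglement of formation, Lemma~\ref{lemma:yang} applied to the bipartition $BB':EE'$ gives
\[
   E_F(\rho^{BB':EE'}) \ge G(\rho^{B:E}) + E_F(\rho^{B':E'}).
\]
Putting these two estimates together,
\[
   \chi(\mathcal{N}\otimes\mathcal{M}) \le \bigl[S(\rho^B)-G(\rho^{BE})\bigr] + \bigl[S(\rho^{B'})-E_F(\rho^{B':E'})\bigr].
\]

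Now I would use the MSW identity in the reverse direction on $\mathcal{M}$ alone: since $\rho^{A'}$ is one particular admissible input,
\[
   \chi(\mathcal{M}) \ge S(\rho^{B'}) - E_F(\rho^{B':E'}).
\]
Subtracting this from the previous inequality, the $\mathcal{M}$-dependent terms cancel and we obtain
\[
   \chi(\mathcal{N}\otimes\mathcal{M}) - \chi(\mathcal{M}) \le S(\rho^B) - G(\rho^{BE}) \le \max_{\sigma^A}\bigl[S(\sigma^B)-G(\sigma^{BE})\bigr],
\]
with $\sigma^{BE}=U\sigma^A U^\dagger$ on the right. Since the bound is independent of $\mathcal{M}$, taking the supremum over $\mathcal{M}$ on the left gives $\chi_p(\mathcal{N})\le \max_{\rho^A}[S(\rho^B)-G(\rho^{BE})]$.

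The main conceptual obstacle — which is already absorbed into Lemma~\ref{lemma:yang} — is handling the cross-entanglement between the two channels: a naive splitting of $E_F(\rho^{BB':EE'})$ into $E_F(\rho^{B:E})+E_F(\rho^{B':E'})$ is false, so one really needs the stronger (asymmetric) bound with $G$ on the $(B,E)$ side. The rest is merely a careful bookkeeping of entropy subadditivity and the ``lower-bound'' use of MSW on $\chi(\mathcal{M})$ to make the $\mathcal{M}$-contribution telescope away.
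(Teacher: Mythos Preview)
Your proof is correct and follows essentially the same route as the paper: MSW identity for $\chi(\mathcal{N}\otimes\mathcal{M})$, subadditivity of $S$, Lemma~\ref{lemma:yang} to split $E_F$ into $G+E_F$, and then MSW for $\chi(\mathcal{M})$ to absorb the second bracket. The only cosmetic difference is that the paper keeps the $\max_{\rho^{A_1A_2}}$ outside and then splits it into two separate maxima, whereas you fix an optimal input first and bound the $\mathcal{M}$-bracket by $\chi(\mathcal{M})$; these are equivalent maneuvers.
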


\begin{proof}
Using the MSW identity, Eq.~(\ref{eq:MSW}), and Lemma~\ref{lemma:yang}, 
we have the following chain of identities and inequalities:
\[\begin{split}
&\chi({{\cal N}}\otimes{\cal M})\\
          =  &\max_{\rho^{A_1A_2}} S(B_1B_2)-E_F(B_1B_2:E_1E_2),\\
         \le &\max_{\rho^{A_1A_2}} \Bigl\{ S(B_1)+S(B_2) - \bigl[ G(B_1:E_1)+E_F(B_2:E_2) \bigr] \Bigr\}, \\
         =   &\max_{\rho^{A_1A_2}} \Bigl\{ \bigl[ S(B_1)-G(B_1:E_1) \bigr]
                                         + \bigl[ S(B_2)-E_F(B_2:E_2) \bigr] \Bigr\},      \\
          \le &\max_{\rho^{A_1}}    \bigl[ S(B_1)-G(B_1:E_1) \bigr] 
                 +\max_{\rho^{A_2}} \bigl[ S(B_2)-E_F(B_2:E_2) \bigr],                     \\
          =  &\max_{\rho^{A_1}}    \bigl[ S(B_1)-G(B_1:E_1) \bigr] + \chi({\cal M}).
\end{split}\]
By definition of $\chi_p$, the claim follows.
\end{proof}

In \cite{Brandao}, a channel is perfect when its capacity is $\log d_{out}$. In the general case, the input space may have the different dimension from the output space. It is obvious that the capacity of the channel is upper-bounded by $\min\{\log d_{in}, \log d_{out}\}$. Here we call a channel perfect if its capacity is equal to $\log d_{\min}$ with $d_{\min}=\min\{d_{in}, d_{out}\}$ and we prove the following corollary.

\begin{corollary}
\label{cor:classical}
If a quantum channel ${\cal N}$ is not perfect for 
transmitting classical information in the single-letter sense, then 
it cannot be activated to the perfect one by any contextual channel: 
\[
  \chi({\cal N}) < \log d_{\min} \Longrightarrow \chi_p({\cal N}) < \log d_{\min}.
\]
\end{corollary}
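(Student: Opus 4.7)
My plan is to argue by contrapositive: if $\chi_p({\cal N}) \ge \log d_{\min}$, then already $\chi({\cal N}) \ge \log d_{\min}$. The engine is the upper bound from the preceding theorem,
\[
  \chi_p({\cal N}) \le \max_{\rho^A} \bigl[ S(\rho^B) - G(\rho^{BE}) \bigr],
\]
combined with the faithfulness of $G$ stated in Lemma~\ref{lemma:yang} and the MSW identity from Eq.~(\ref{eq:MSW}).

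The first step is to observe two elementary bounds on the objective $S(\rho^B) - G(\rho^{BE})$. Trivially $S(\rho^B) \le \log d_B$; and since $\rho^B = {\cal N}(\rho^A)$ has rank at most $\rank\rho^A \le d_A$, also $S(\rho^B) \le \log d_A$. Hence $S(\rho^B) \le \log d_{\min}$, while $G(\rho^{BE}) \ge 0$. Next I want to claim that the maximum in the displayed bound is actually attained: the set of input states is compact, $S(\rho^B)$ depends continuously on $\rho^A$, and $G$ is a convex-roof-type quantity, hence lower semi-continuous in $\rho^{BE}$, so $S(\rho^B) - G(\rho^{BE})$ is upper semi-continuous and attains its supremum on the compact domain.

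Now suppose, toward contradiction, that $\chi_p({\cal N}) \ge \log d_{\min}$. By the theorem there exists $\rho^A$ with
\[
  S(\rho^B) - G(\rho^{BE}) \ge \log d_{\min}.
\]
Combined with the two inequalities $S(\rho^B) \le \log d_{\min}$ and $G(\rho^{BE}) \ge 0$, this forces $S(\rho^B) = \log d_{\min}$ and $G(\rho^{BE}) = 0$ simultaneously. Faithfulness of $G$ (Lemma~\ref{lemma:yang}) then gives that $\rho^{BE}$ is separable, and a separable state admits a product-state decomposition, so $E_F(\rho^{BE}) = 0$.

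The final step is to plug this into the MSW identity: for this specific $\rho^A$,
\[
  \chi({\cal N}) \ge S(\rho^B) - E_F(\rho^{BE}) = \log d_{\min} - 0 = \log d_{\min},
\]
contradicting the hypothesis $\chi({\cal N}) < \log d_{\min}$. I expect the main technical wrinkle to be the attainment of the supremum, which hinges on the lower semi-continuity of $G$; this should be immediate from its definition as a convex-roof minimization of a continuous functional of $\sigma^{BE}$ on the compact simplex of decompositions, but it is the only nontrivial analytic ingredient beyond the algebraic chain described above.
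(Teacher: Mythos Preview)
Your argument works cleanly when $d_{\min}=d_{out}$, and in that case it matches the paper's proof essentially verbatim. The problem is your claim that ``$\rho^B={\cal N}(\rho^A)$ has rank at most $\rank\rho^A\le d_A$'', which you use to deduce $S(\rho^B)\le\log d_{in}$. This is false: a channel can increase rank (e.g.~the completely depolarizing channel sends every pure state to the maximally mixed state). So when $d_{in}<d_{out}$ you cannot conclude $S(\rho^B)\le\log d_{\min}$, and hence you cannot force $G(\rho^{BE})=0$. The faithfulness argument collapses precisely in this case.

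The paper anticipates this: it splits into two cases and handles $d_{\min}=d_{in}$ via the Appendix (Lemma~\ref{lemma:IGE}). The correct replacement for your rank bound is the inequality $S(B)-G(B\!:\!E)\le S(BE)=S(A)\le\log d_{in}$, which follows from Lemma~\ref{lemma:IGE} (itself resting on $E_F(R\!:\!B)\le S(R)$ via the purification). Saturation then gives $G(B\!:\!E)=S(B)-S(BE)$ rather than $G=0$, and the nontrivial content of Lemma~\ref{lemma:IGE} is that this equality propagates to $E_F(B\!:\!E)=S(B)-S(BE)$. Plugging into MSW yields $\chi({\cal N})\ge S(B)-E_F(B\!:\!E)=S(BE)=\log d_{in}$. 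So the missing idea is that in the $d_{in}<d_{out}$ regime one must work with the equality case of $S(B)-S(BE)\le G(B\!:\!E)$ instead of faithfulness of $G$; this is exactly what the paper's Remark after the corollary flags as the new ingredient beyond~\cite{Brandao}.
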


\begin{proof} 
Suppose the potential capacity of the channel is $\chi_p({\cal N}) = \log d_{\min}$.

In the case of $d_{\min}=d_{out}=d$,  from $C_p({\cal N}) \le \chi_p({\cal N}) \le \max_{\rho_A}[S(\rho_B)-G(\rho_{BE})]$, 
we know that there is an input state $\rho^A$ such that for 
$\rho^{BE} = U \rho^A U^{\dagger}$, we have
$S(\rho_B)=\log d$ and $G(\rho_{BE})=0$. 

Since $G$ is faithful (see Lemma~\ref{lemma:yang}), this means that 
$\rho_{BE}$ is separable, which amounts to $E_F(\rho_{BE})=0$. 
From the MSW identity, Eq.~(\ref{eq:MSW}), we obtain that
$\chi({\cal N})=\log d$, which means the channel is perfect in the single-letter
sense.

In the case of $d_{\min}=d_{in}=d$, suppose $C_p({\cal N})=\log d=S(A)=S(BE)$, 
where $\rho^A=\frac{1}{d}\1$ and $\rho^{BE} = U \rho^A U^{\dagger}$. 
From Lemma~\ref{lemma:IGE} in the Appendix, we obtain 
$\log d=S(A)=C_p({\cal N})\le \max_{\rho_A}[S(\rho_B)-G(\rho_{B:E})]\le \max_{\rho_A} S(A)=\log d$. 
So $\rho^A=\frac{1}{d}\1$ is the optimal input to achieve $\max_{\rho_A}[S(\rho_B)-G(\rho_{BE})]$. 
This means $G(\rho_{B:E})=S(B)-S(BE)$ for the state $\rho^{BE}$. 
Also from Lemma~\ref{lemma:IGE}, we know that $E_F(B:E)=S(B)-S(BE)$,
meaning that $\chi({\cal N})=\log d$.
\end{proof}

\medskip
\begin{remark}
\normalfont
In \cite{Brandao}, it is shown that if $\chi(\cN)<\log d_{out}$, then $C(\cN)<\log d_{out}$. Notice that Holevo capacity is the capacity when the codewords are restricted to product states. That is to say if the capacity when using product state encoding cannot achieve the possibly maximal quantity $\log d_{out}$, then it cannot either when using entangled state encoding. In other words, an imperfect channel cannot be activated to a perfect one by itself. Corollary \ref{cor:classical} is stronger in two points. One is that it covers the case $d_{in}<d_{out}$ where \cite{Brandao} says nothing about. Indeed it is not immediately to obtain so we need the Appendix to deal with this case. The other point is that Corollary \ref{cor:classical} asserts an imperfect channel cannot be activated to a perfect one by {\it any} channel. The reasoning for $d_{\min}=d_{out}$ is almost the same as that 
in \cite{Brandao} but for $d_{\min}=d_{in}$ we need more. Here we emphasiz that we use the particular 
entanglement measure $G(\rho_{BE})$ while other entanglement measures may be employed to prove the result in \cite{Brandao}.
\end{remark}

\subsection{Potential quantum capacity}
\label{subsec:Q}
In this section, we move on to the potential quantum capacity and study its 
relations to the single-letter quantity $Q^{(1)}({\cal N})$. In~\cite[Sec.~VII]{Smith},
this had been introduced under the name of ``quantum value added capacity'',
and our Lemma~\ref{lemma:pot-subadd} already been observed in that case.
Here, we establish an upper bound in terms of the entanglement of formation
of the channel, and finally prove that an imperfect quantum channel cannot be 
activated into a perfect one by any other contextual channel. 

\begin{definition}
Specializing Definition~\ref{def:f-pot}
to the case $f\equiv Q$, we obtain the \emph{potential quantum capacity}
\be
  Q_p({\cal N}) = \sup_{\cal M} \bigl[ Q({\cal N}\otimes{\cal M})-Q({\cal M}) \bigr],
\ee
and the \emph{potential single-letter quantum capacity}
\be
  Q^{(1)}_p({\cal N}) = \sup_{\cal M} \bigl[ Q^{(1)}({\cal N}\otimes{\cal M})-Q^{(1)}({\cal M}) \bigr].
\ee
\end{definition}

By Lemma~\ref{lemma:chain}, we have
\be
  \label{q-order}
  Q^{(1)}({\cal N}) \le Q({\cal N}) \le Q_p({\cal N}) \le Q^{(1)}_p({\cal N}).
\ee

The \emph{symmetric side-channel assisted quantum capacity},
$Q_{ss}$, introduced and investigated in~\cite{Smith}, is obtained
by restricting the above optimization to channels ${\cal M}$ that
are symmetric, i.e.~both degradable and anti-degradable, which is a 
special subclass of zero-capacity channels. Unlike $Q$, $Q_{ss}$ is 
additive and has many other nice properties, and from the definition
and the above, we have (cf.~\cite[Sec.~VII]{Smith})
\be
  Q_{ss}({\cal N}) \le Q_p({\cal N}) \le Q^{(1)}_p({\cal N}).
\ee

How do we establish the upper bound for the potential quantum capacity? 
The idea is channel simulation inspired by the approach to obtain an upper 
bound for the quantum capacity: If the channel ${\cal N}$ can be simulated 
by another channel ${\cal N}^{\uparrow}$ using pre- and post-processing,
i.e.~${\cal N} = {\cal T} \circ {\cal N}^\uparrow \circ {\cal S}$ with
suitable CPTP maps ${\cal S}$ and ${\cal T}$, 
then clearly $Q({\cal N})\le Q({\cal N}^{\uparrow})$. We call ${\cal N}^\uparrow$
a \emph{lifting} of ${\cal N}$.
Furthermore, if the channel ${\cal N}^{\uparrow}$ is degradable, then its 
quantum capacity is given by the single-letter capacity $Q^{(1)}({\cal N}^{\uparrow})$, 
and obtain a single-letter upper bound for $Q({\cal N})$. This was observed
and exploited before under the name of ``additive extensions''~\cite{SmithSmolin-lifting}.

From inequality (\ref{q-order}) and the definition of potential single-letter
quantum capacity, we see that we should try to lift the channel to a strongly
additive one, because then we get even an upper bound for the potential quantum 
capacity, and in fact the potential single-letter quantum capacity! 

However it is not enough to lift the channel to a degradable one, because we learn 
from the superactivation phenomenon that its single-letter quantum capacity is not 
strongly additive. But an even narrower class of degradable channels, called 
Hadamard channels, satisfies the required property. 

\begin{definition}
  A \emph{Hadamard channel (HC)}~\cite{Hadamard} ${\cal N}$ is a quantum channel
  whose complementary channel ${\cal N}^c$ is an \emph{entanglement-breaking channel (EBC)} \cite{EBC}, where ${\cal N}^c$ can be expressed as
  \[
    {\cal N}^c(\rho) = \sum_i \proj{\phi_i} \bra{\tilde{\psi_i}}\rho \ket{\tilde{\psi_i}},
  \]
  in which $\sum_i \proj{\tilde{\psi_i}} = \1$ is a POVM. 
  Such channels are said to be entanglement-breaking because the output state 
  $\id\otimes{\cal N}^c(\rho^{RA})$ is separable for any state $\rho^{RA}$.
\end{definition}

The isometry of the Hadamard channel ${\cal N}$ is of the form 
(up to local unitary operation on $E$)
\be
  \label{eq:iso-H}
  V = \sum_i \ket{i}^B \ket{\phi_i}^E \bra{\tilde{\psi_i}}^A,
\ee
from which we see that the Hadamard channel ${\cal N}$ can simulate its 
complementary channel ${\cal N}^c$ by the operation of first measuring in the 
basis $\ket{i}$ and then preparing the state $\ket{\phi_i}$ 
according to the outcome of the measurement. 
Thus Hadamard channels are special degradable channels. 
\begin{proposition}[Cf. Bradler et al.~\cite{Bradler}, Wilde/Hsieh~\cite{WildeHsieh}]
\label{prop:H-strong-add}
If ${\cal N}$ is a Hadamard channel, then $Q^{(1)}$ is strongly additive:
$Q^{(1)}({\cal N}\otimes{\cal M}) = Q^{(1)}({\cal N})+Q^{(1)}({\cal M})$ 
for any contextual channel ${\cal M}$.
\end{proposition}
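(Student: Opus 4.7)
The plan is to bound the coherent information $I_c(\cN\otimes\cM,\psi)$ on an arbitrary pure input and split it as an ``$\cN$-contribution'' plus an ``$\cM$-contribution'' by exploiting the classical bottleneck in the Hadamard structure. Concretely, I would take an arbitrary pure input $|\psi\rangle^{RAA'}$, apply the Stinespring isometries $V^{A\to BE}$ of $\cN$ (in the form of Eq.~(\ref{eq:iso-H})) and $W^{A'\to B'E'}$ of $\cM$, and denote the resulting pure output by $|\omega\rangle^{RBEB'E'}$. By the chain rule for the von Neumann entropy,
\[
  S(BB')_\omega - S(EE')_\omega = \bigl[ S(B)_\omega - S(E)_\omega \bigr] + \bigl[ S(B'|B)_\omega - S(E'|E)_\omega \bigr],
\]
and the first bracket is, by definition, a coherent information of $\cN$ evaluated on the reduced input $\rho^A=\tr_{RA'}|\psi\rangle\langle\psi|$, hence at most $Q^{(1)}(\cN)$.

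Next, I would control the second bracket by introducing the classical register $Y$ produced when $B$ is measured in the Hadamard basis $\{|i\rangle\}$ of Eq.~(\ref{eq:iso-H}); let $\rho$ denote the resulting cq-state. The measurement $B\to Y$ is a CPTP map applied to the conditioning system, so the data-processing inequality for conditional entropy gives $S(B'|B)_\omega \leq S(B'|Y)_\rho$. On the other hand, since $\cN$ is Hadamard the complementary $\cN^c$ is entanglement-breaking, with the explicit decomposition $\cN^c = \cP\circ\cB$, where $\cB:A\to Y$ is the POVM $\{|\tilde{\psi_i}\rangle\langle\tilde{\psi_i}|\}$ and $\cP:Y\to E$ is the preparation $|i\rangle\langle i|\mapsto|\phi_i\rangle\langle\phi_i|$. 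A direct calculation shows $\rho^{EE'}=(\cP\otimes I)\rho^{YE'}=\omega^{EE'}$, so a second application of data processing yields $S(E'|Y)_\rho \leq S(E'|E)_\rho = S(E'|E)_\omega$. Combining these, $S(B'|B)_\omega - S(E'|E)_\omega \leq S(B'|Y)_\rho - S(E'|Y)_\rho$.

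Finally, I would evaluate this residual. The cq-structure of $\rho$ gives $\rho^{YB'}=\sum_i p_i |i\rangle\langle i|^Y\otimes\sigma_i^{B'}$ and $\rho^{YE'}=\sum_i p_i|i\rangle\langle i|^Y\otimes\sigma_i^{E'}$, with $\sigma_i^{B'}=\cM(\psi_i^{A'})$ and $\sigma_i^{E'}=\cM^c(\psi_i^{A'})$, where $|\psi_i\rangle^{RA'}\propto(\langle\tilde{\psi_i}|^A\otimes I)|\psi\rangle$ are the pure conditional inputs and $p_i=\langle\tilde{\psi_i}|\rho^A|\tilde{\psi_i}\rangle$. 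The residual then evaluates to $\sum_i p_i\,[S(\sigma_i^{B'})-S(\sigma_i^{E'})]=\sum_i p_i\,I_c(\cM,\psi_i^{A'})\leq Q^{(1)}(\cM)$. Assembling the pieces gives $I_c(\cN\otimes\cM,\psi)\leq Q^{(1)}(\cN)+Q^{(1)}(\cM)$, and maximizing over $|\psi\rangle$ yields $Q^{(1)}(\cN\otimes\cM)\leq Q^{(1)}(\cN)+Q^{(1)}(\cM)$; the reverse inequality is immediate from product inputs.

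The hard part will be ensuring that the two data-processing inequalities go in compatible directions so that one can meaningfully subtract $S(E'|\cdot)$ from $S(B'|\cdot)$. This is precisely what the Hadamard structure guarantees: it pins down simultaneously the measurement basis on $B$ (allowing $B$ to be collapsed to the classical $Y$) and the same classical index $Y$ as the source of the states $|\phi_i\rangle$ on $E$ (placing $E$ downstream of $Y$ via the CPTP map $\cP$). Together, these two facts collapse the ``cross term'' $S(B'|B)-S(E'|E)$ into a convex combination of single-letter coherent informations of $\cM$ and close the argument.
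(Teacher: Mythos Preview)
Your proof is correct and essentially identical to the paper's: both split the coherent information via the chain rule into an $\cN$-part $S(B)-S(E)$ and a cross-term $S(B'|B)-S(E'|E)$, then sandwich the cross-term by two data-processing inequalities---one collapsing $B$ to the classical register $Y$ (the paper does this via the copy isometry $|i\rangle^{B}\mapsto|i\rangle^Y|i\rangle^Z$ followed by partial trace, you via direct measurement; these are the same map), and one going from $Y$ to $E$ via the preparation $\cP:|i\rangle\langle i|\mapsto|\phi_i\rangle\langle\phi_i|$---before evaluating the resulting cq-conditional entropies as a convex combination of coherent informations of $\cM$. The only cosmetic difference is that the paper bundles both steps into a single display, whereas you spell out the intermediate identity $\omega^{EE'}=(\cP\otimes I)\rho^{YE'}$ explicitly.
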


\begin{proof} 
The ``$\ge$'' part is trivial and we only need to prove the ``$\le$'' part.
Suppose the isometry of the Hadamard channel ${\cal N}$ is 
$V: A_1\hookto B_1\otimes E_1$, of the form (\ref{eq:iso-H}), 
and the isometry of ${\cal M}$ is $W: A_2\hookto B_2\otimes E_2$. 
The output, for an input state $\ket{\phi}^{RA_1A_2}$, is 
\[\begin{split}
  V\otimes W \ket{\phi}^{RA_1A_2}
         &= V \ket{\phi}^{RA_1B_2E_2},                                                                 \\
         &= \sum_i \ket{i}^{B_1}\ket{\phi_i}^{E_1} \bra{\tilde{\psi_i}}^{A_1}\ket{\phi}^{RA_1B_2E_2}, \\
         &= \sum_i \sqrt{p_i} \ket{i}^{B_1} \ket{\phi_i}^{E_1} \ket{\psi_i}^{RB_2E_2}.
\end{split}\]
The coherent information is thus,
\[\begin{split}
  &\quad~S(B_1B_2)-S(E_1E_2)\\
   &=   S(B_1)+S(B_2|B_1)-S(E_1)-S(E_2|E_1),\\
                      &\le S(B_1)-S(E_1)+S(B_2|Y)-S(E_2|Y),\\
                      &= S(B_1)-S(E_1)+\sum_ip_i(S(B_2)_{\psi_i}-S(E_2)_{\psi_i}),\\
                      &\le Q^{(1)}({\cal N})+Q^{(1)}({\cal M}),
\end{split}\]
where we use the isometry $V:\ket{i}^{B_1} \mapsto \ket{i}^{Y}\ket{i}^{Z}$ and the fact that $S(A|B) \le S(A|C)$ if there is an operation 
${\cal E}^{B\to C}$ satisfying $\rho_{AC}=\id\otimes{\cal E}^{B\to C}(\rho_{AB})$: $S(B_2|B_1)\le S(B_2|Y)$ from the operation ${\cal E}^{B_1\to Y}(\rho)=\tr_EV\rho V^{\dagger}$ and $S(E_2|E_1)\ge S(E_2|Y)$ from the operation ${\cal E}^{Y\to E_1}(\rho)=\sum_i\bra{i}\rho\ket{i}^Y\ketbra{\phi_i}{\phi_i}^{E_1}$.
\end{proof}

\medskip
\begin{remark}
\normalfont
The above proposition~\ref{prop:H-strong-add} is a special case of a more general 
fact \cite[Lemma 4]{Bradler}, \cite[Thm.~3 \&{} Lemma 2]{WildeHsieh}; cf.~also~\cite[App.~B, Lemma 7]{Devetak-degradable} 
for the special case of ``generalized dephasing channels'' (also known as Schur multipliers).
\end{remark}

\medskip
An immediate corollary is the following.
\begin{corollary}
  The potential quantum capacity (and potential single-letter quantum capacity)
  of a Hadamard channel ${\cal N}$ is equal to its single-letter quantum capacity:\\
  \[
    Q^{(1)}_p({\cal N}) = Q_p({\cal N}) = Q({\cal N}) = Q^{(1)}({\cal N}).
  \]
\end{corollary}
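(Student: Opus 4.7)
The plan is to sandwich everything between $Q^{(1)}(\cN)$ at the bottom and $Q^{(1)}_p(\cN)$ at the top, using the chain already established in Lemma~\ref{lemma:chain}, and then show that strong additivity of $Q^{(1)}$ on Hadamard channels collapses the top of the chain back down to $Q^{(1)}(\cN)$.

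Concretely, I would first recall the inequality $Q^{(1)}(\cN) \le Q(\cN) \le Q_p(\cN) \le Q^{(1)}_p(\cN)$ from (\ref{q-order}). This reduces the corollary to showing the single equality $Q^{(1)}_p(\cN) = Q^{(1)}(\cN)$ when $\cN$ is Hadamard, since everything in between is then forced to coincide.

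To establish that single equality, I would apply Proposition~\ref{prop:H-strong-add} directly inside the definition of $Q^{(1)}_p$. Namely, for any contextual channel $\cM$, strong additivity of $Q^{(1)}$ on the Hadamard channel $\cN$ gives $Q^{(1)}(\cN\otimes\cM) - Q^{(1)}(\cM) = Q^{(1)}(\cN)$, independently of $\cM$. Taking the supremum over $\cM$ in the definition $Q^{(1)}_p(\cN) = \sup_\cM [Q^{(1)}(\cN\otimes\cM) - Q^{(1)}(\cM)]$ therefore yields $Q^{(1)}_p(\cN) = Q^{(1)}(\cN)$, and the chain collapses.

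There is essentially no obstacle here: Proposition~\ref{prop:H-strong-add} does all the work, and the argument is just bookkeeping with the chain of inequalities. The only minor subtlety worth flagging is that one should verify the sup is attained as an equality rather than a strict inequality, but this follows immediately by taking, e.g., $\cM$ to be the one-dimensional identity channel, which shows $Q^{(1)}_p(\cN) \ge Q^{(1)}(\cN)$, matching the strong-additivity upper bound.
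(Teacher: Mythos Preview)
Your argument is correct and is exactly what the paper intends: it states the result as ``an immediate corollary'' of Proposition~\ref{prop:H-strong-add} without further proof, and your use of the chain (\ref{q-order}) together with strong additivity to collapse $Q^{(1)}_p(\cN)$ down to $Q^{(1)}(\cN)$ is precisely the intended bookkeeping.
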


Thus we have reduced our task to finding a good lifting of a given
channel to a Hadamard channel. The question how to find the optimal one
is of interest in itself and we will discuss the general method elsewhere \cite{Yang2015}. 
For our present purposes, there is a rather straightforward way to lift
a channel to a Hadamard channel. Namely, choose Kraus operators for ${\cal N}$
as ${\cal N}(\rho)=\sum K_i\rho K_i^{\dagger}$. Then a Stinespring isometry 
for ${\cal N}$ can be written as $U = \sum_i \ket{i}^E K_i^{A\rightarrow B}$. 

Let us define a new channel, the lifting ${\cal N}^\uparrow$, via its
isometry
\[
  V^{A\rightarrow BB'\otimes E} := \sum_i \ket{i}^E \ket{i}^{B'} K_i^{A\rightarrow B},
\]
where the environment system is still $E$, but the receiver has now $BB'$,
and $B'$ holds a coherent copy of $E$. As we now give a copy of $E$ to
the channel output, the output of the complementary channel of ${\cal N}^\uparrow$
will be completely decohered in the $\{\ket{i}\}$ basis, so the complementary
channel is EBC, hence ${\cal N}^\uparrow$ is Hadamard, as desired.
The Kraus operators of the lifted Hadamard channel ${\cal N}^{\uparrow}$ are
$\{\ket{i} \otimes K_i\}$, and one way to write the channel is as
\[
  {\cal N}^{\uparrow}(\rho) = \sum_i \proj{i}\otimes K_i\rho K_i^{\dagger},
\]
 which we call the \emph{canonical lifting}.
Its quantum capacity is
\begin{align*}
  Q({\cal N}^\uparrow) &= \max_{\ket{\phi}^{RA}} \sum_i p_i S(\rho_i^B),\\
      \text{s.t. } p_i &= \tr (\1 \otimes K_i) \proj{\phi} (\1\otimes K_i)^{\dagger},\\
                \rho_i &= \frac{1}{p_i} \tr_R (\1 \otimes K_i) \proj{\phi} (\1\otimes K_i)^{\dagger}.
\end{align*}
Now take the minimum over all different Kraus representations (which after all
we are free to choose), to obtain the best bound from this particular family
of canonical liftings.

As a result, the quantum capacity of the optimal canonical lifting is
equal to the entanglement of formation of the original channel, which is 
defined as
\[
  E_F({\cal N}) := \max_{\ket{\phi}^{RA}} \min_{\{K_i\}} \sum_i p_i E(\phi_i),
  {\sqrt{p_i}} \ket{\phi_i}^{RB} = (\1\otimes K_i)\ket{\phi},
\]
and where $E(\varphi)$ is the entropy of entanglement of the
bipartite pure state $\varphi$.

The following lemma is implied by the proof of \cite[Lemma~13]{Berta}, 
though not explicitly stated there. 

\begin{lemma}[Berta \emph{et al.}~\cite{Berta}] 
  \label{minmax} 
  With the above notation, the following minimax formula holds:
  \[
    \min_{\{K_i\}} \max_{\ket{\phi}^{RA}} \sum_i p_i E(\phi_i)
            = \max_{\ket{\phi}^{RA}} \min_{\{K_i\}} \sum_i p_i E(\phi_i),
  \]
  where the infimum is taken over all Kraus representations of the channel ${\cal N}$.
\end{lemma}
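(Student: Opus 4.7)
The plan is to prove the identity by Sion's minimax theorem after recognising that the objective has the familiar convex--concave structure of a conditional entropy of a classical--quantum state.

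First I would reduce the objective to a function of $\rho^A = \tr_R \proj{\phi}^{RA}$ alone. Since each $\ket{\phi_i}^{RB}$ is pure, $E(\phi_i) = S(\tr_R \proj{\phi_i}) = S(K_i \rho^A K_i^\dagger / p_i)$ with $p_i = \tr K_i \rho^A K_i^\dagger$, so $F(\rho^A, \{K_i\}) := \sum_i p_i E(\phi_i) = \sum_i p_i S(K_i\rho^A K_i^\dagger/p_i)$. Introducing a classical pointer $X$ and the classical--quantum state $\tau(\rho^A) = \sum_i \proj{i}^X \otimes K_i \rho^A K_i^\dagger$, this reads $F(\rho^A, \{K_i\}) = S(B|X)_{\tau(\rho^A)}$. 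Since varying the Kraus representation (over arbitrary lengths) enumerates all pure-state decompositions of the output $\sigma^{RB} = (\id\otimes\cN)(\proj{\phi}^{RA})$, $\min_{\{K_i\}} F(\rho^A, \{K_i\}) = E_F(\sigma^{RB})$, so $\max_{\rho^A}\min_{\{K_i\}} F = E_F(\cN)$ by definition.

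Next I would choose a joint compact convex domain. Rather than grapple with variable-length Kraus lists, I would work with the set ${\cal Y}_m$ of CPTP maps $\Lambda: A \to X \otimes B$, where $X$ is classical of a fixed sufficiently large dimension $m$, satisfying $\tr_X \circ \Lambda = \cN$. This is an affine section of the compact convex set of all CPTP maps $A \to X \otimes B$, hence itself compact and convex. Each $\Lambda \in {\cal Y}_m$ corresponds to a length-$m$ Kraus representation via $K_i \rho K_i^\dagger = \bra{i}^X \Lambda(\rho) \ket{i}^X$, and $F(\rho^A, \Lambda) = S(B|X)_{\Lambda(\rho^A)}$ is affine in $\Lambda$, concave in $\rho^A$ (because $\rho^A \mapsto \Lambda(\rho^A)$ is linear and conditional entropy is concave in its argument), and jointly continuous.

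Sion's minimax theorem then yields $\min_{\Lambda \in {\cal Y}_m} \max_{\rho^A} F = \max_{\rho^A} \min_{\Lambda \in {\cal Y}_m} F$. For $m$ chosen large enough that every optimal EOF decomposition of any relevant $\sigma^{RB}$ has at most $m$ pure-state terms (by a Carath\'eodory count, $m = (\dim A \cdot \dim B)^2$ always suffices), the right-hand side collapses to $\max_{\rho^A} E_F(\sigma^{RB}) = E_F(\cN)$. Combined with the weak-duality sandwich $E_F(\cN) = \max_{\rho^A}\inf_{\{K_i\}} F \le \inf_{\{K_i\}}\max_{\rho^A} F \le \min_{\Lambda \in {\cal Y}_m}\max_{\rho^A} F = E_F(\cN)$ (where restricting the inner domain can only enlarge the inner min), this pins both sides of the claimed identity at $E_F(\cN)$. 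The main obstacle I anticipate is precisely this dimension-bounding step — justifying that a uniform cap on Kraus length is harmless on the min-max side — and this is exactly the subtlety that the proof of \cite[Lemma~13]{Berta} has to confront.
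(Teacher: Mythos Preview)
The paper supplies no proof of its own for this lemma; it merely records that the claim is implied by the proof of \cite[Lemma~13]{Berta}. So there is no in-paper argument to compare against, and the question is whether your proposal stands on its own.

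It does not, because of one concrete error: $F(\rho^A,\Lambda)=S(B|X)_{\Lambda(\rho^A)}$ is \emph{not} affine in $\Lambda$. Conditional entropy is concave in the state and $\Lambda\mapsto\Lambda(\rho^A)$ is linear, so $F$ is concave in $\Lambda$ --- the wrong curvature for the minimising variable in Sion's theorem. For instance, take $\cN$ to be the qubit complete-dephasing channel and the two elements of ${\cal Y}_2$ coming from the Kraus sets $\{\proj{0},\proj{1}\}$ and $\{\tfrac{1}{\sqrt2}\1,\tfrac{1}{\sqrt2}Z\}$: at $\rho^A=\tfrac12\1$ they give $F=0$ and $F=1$, while their midpoint in ${\cal Y}_2$ gives $F=h(\tfrac14)\approx 0.811 > \tfrac12$, so affinity fails strictly. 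A related slip is the claim that every $\Lambda\in{\cal Y}_m$ corresponds to a Kraus representation via $K_i\rho K_i^\dagger=\bra{i}\Lambda(\rho)\ket{i}$; this holds only for the extremal (rank-one) instruments, not for generic points of the convex set ${\cal Y}_m$.

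Everything else in your outline --- the reduction to $\rho^A$, the identification of the max--min side with $E_F(\cN)$, and the Carath\'eodory bound on Kraus length --- is fine, but the central minimax step cannot be obtained from a convex--concave Sion argument on the instrument domain. The argument one extracts from \cite{Berta} proceeds differently, essentially by exhibiting a saddle point directly rather than via Sion.
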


\medskip
Now we obtain an upper bound on potential quantum capacity in terms of 
the entanglement of formation of the channel:

\begin{theorem}
  \label{thm:upper-Q_p}
  For a general channel ${\cal N}$, we have the following upper bound
  on the potential quantum capacity:
  \[
    Q_p({\cal N}) \le Q^{(1)}_p({\cal N}) \le E_F({\cal N}).
  \]
\end{theorem}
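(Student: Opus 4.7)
The first inequality $Q_p(\cN)\le Q^{(1)}_p(\cN)$ is just an instance of Lemma~\ref{lemma:chain}, so the real content is the bound by $E_F(\cN)$. The plan is to combine the canonical Hadamard lifting $\cN^\uparrow$ introduced just before the statement with the strong additivity of Hadamard channels (Proposition~\ref{prop:H-strong-add}) and the minimax of Lemma~\ref{minmax}.

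First, fix an arbitrary Kraus representation $\cN(\rho)=\sum_i K_i\rho K_i^\dagger$ and form the corresponding canonical lifting $\cN^\uparrow$, with isometry $V=\sum_i\ket{i}^E\ket{i}^{B'}K_i^{A\to B}$. As was noted in the text, $\cN^\uparrow$ is Hadamard and obeys the simulation relation $\tr_{B'}\circ\cN^\uparrow=\cN$. Since any encoder/decoder for $\cN\otimes\cM$ lifts to one for $\cN^\uparrow\otimes\cM$ by post-composing the receiver with $\tr_{B'}$, we get
\[
Q^{(1)}(\cN\otimes\cM)\le Q^{(1)}(\cN^\uparrow\otimes\cM)
\]
for every contextual channel $\cM$. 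Proposition~\ref{prop:H-strong-add} identifies the right-hand side with $Q^{(1)}(\cN^\uparrow)+Q^{(1)}(\cM)$, so subtracting $Q^{(1)}(\cM)$ and taking the supremum over $\cM$ yields $Q^{(1)}_p(\cN)\le Q^{(1)}(\cN^\uparrow)$ for \emph{every} choice of Kraus representation.

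Next, I would evaluate $Q^{(1)}(\cN^\uparrow)$ explicitly. For a pure input $\ket{\phi}^{RA}$ with branches $\sqrt{p_i}\ket{\phi_i}^{RB}=(\1\otimes K_i)\ket{\phi}^{RA}$, the output pure state is $\sum_i\sqrt{p_i}\ket{\phi_i}^{RB}\ket{i}^{B'}\ket{i}^E$. The coherent copy in $B'$ and $E$ makes the reduced state on $BB'$ block-diagonal and the reduced state on $E$ equal to $\sum_i p_i\proj{i}$, so a short calculation gives $S(BB')-S(E)=\sum_i p_i S(\tr_R\proj{\phi_i})=\sum_i p_i E(\phi_i)$. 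Hence $Q^{(1)}(\cN^\uparrow)=\max_{\ket{\phi}^{RA}}\sum_i p_i E(\phi_i)$. Minimizing over Kraus representations (since each gives a valid upper bound) and applying Lemma~\ref{minmax} to exchange the min and max yields
\[
Q^{(1)}_p(\cN)\le\min_{\{K_i\}}\max_{\ket{\phi}^{RA}}\sum_i p_i E(\phi_i)=\max_{\ket{\phi}^{RA}}\min_{\{K_i\}}\sum_i p_i E(\phi_i)=E_F(\cN).
\]

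The main obstacle is precisely this final minimax exchange: the canonical-lifting argument naturally produces a $\min\max$ quantity, whereas the channel entanglement of formation $E_F(\cN)$ is defined as the corresponding $\max\min$, and these are not a priori equal. Lemma~\ref{minmax} (Berta \emph{et al.}) supplies exactly the required exchange, so the argument closes. Everything else — the simulation $\tr_{B'}\circ\cN^\uparrow=\cN$, the Hadamard property of $\cN^\uparrow$, strong additivity, and the coherent-information calculation — is routine given the machinery already assembled.
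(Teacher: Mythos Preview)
Your proof is correct and follows exactly the paper's approach: simulation via the canonical Hadamard lifting, strong additivity of Hadamard channels (Proposition~\ref{prop:H-strong-add}), and the minimax identity of Lemma~\ref{minmax}. One small remark: the inequality $Q^{(1)}(\cN\otimes\cM)\le Q^{(1)}(\cN^\uparrow\otimes\cM)$ is more cleanly justified by monotonicity of the coherent information under output post-processing (data processing) than by the encoder/decoder language you use, which is really an argument about the operational capacity $Q$ rather than $Q^{(1)}$.
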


\begin{proof} 
Lifting the channel to the optimal canonical Hadamard channel, we get
\[\begin{split}
Q^{(1)}({\cal N}\otimes{\cal M}) 
   &\le Q^{(1)}({\cal N}^{\uparrow}\otimes{\cal M}),    \\
   =  &Q^{(1)}({\cal N}^{\uparrow})+Q^{(1)}({\cal M})=   E_F({\cal N})+Q^{(1)}({\cal M}),
\end{split}\]
where the first inequality comes from simulation, the first equality from 
the strong additivity (Proposition~\ref{prop:H-strong-add}), 
and the second equality from Lemma \ref{minmax}.
\end{proof}

Analogous to the classical capacity, we call a channel perfect if its quantum capacity is equal to $\log d_{\min}$ with $d_{\min} =\min\{d_{in}, d_{out}\}$.
Before we have the similar corollary, we recall a result in \cite{SW-QEC}.

\begin{lemma}[Schumacher/Westmoreland~\cite{SW-QEC}]
\label{lemma:SW-QEC}
For a pure bipartite state $\ket{\phi^{RA}}$, let the system $A$ be transmitted through a 
channel ${\cal N}: A\to B$ and the joint output state is $\rho^{RB}=\id\otimes {\cal N}(\phi^{RA})$. 
If $E_F(R:B)=S(R)$, then there exists a quantum operation ${\cal D}:B\to A$ such that 
$\id\otimes {\cal D}(\rho^{RB})=\phi^{RA}$.
\end{lemma}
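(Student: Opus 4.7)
My plan is to show that $E_F(R:B) = S(R)$ forces the Stinespring--dilated state $\ket{\Phi}^{RBE} = (\1_R \otimes U)\ket{\phi}^{RA}$ to satisfy $\rho^{RE} = \rho^R \otimes \rho^E$; once the $R$--environment correlation has been eliminated, the recovery map ${\cal D}$ is built via Uhlmann's theorem. The first step uses the identity
\[
  S(\rho^R) - \sum_i p_i S(\phi_i^R) = \sum_i p_i D(\phi_i^R \| \rho^R) \ge 0,
\]
valid for any pure-state decomposition $\rho^{RB} = \sum_i p_i \proj{\phi_i}^{RB}$, which gives $\sum_i p_i S(\phi_i^R) \le S(R)$ with equality iff each $\phi_i^R = \rho^R$. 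Since also $\sum_i p_i S(\phi_i^R) \ge E_F(R:B) = S(R)$ by definition of the entanglement of formation, equality in fact holds for \emph{every} decomposition, not just the optimizer; equivalently, $\psi^R = \rho^R$ for every pure $\ket{\psi}^{RB}$ that lies in the support of $\rho^{RB}$ (any such vector can be included with positive weight in some pure-state ensemble for $\rho^{RB}$).

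In the second step I diagonalize $\rho^{RB} = \sum_k \lambda_k \proj{v_k}^{RB}$, so that each $v_k^R = \rho^R$, and then apply the previous observation to the superposition vectors $(\ket{v_k}+\ket{v_l})/\sqrt{2}$ and $(\ket{v_k}+i\ket{v_l})/\sqrt{2}$, which also lie in the support. A short calculation of their reductions on $R$ forces $\tr_B \ket{v_k}\bra{v_l}^{RB} = 0$ whenever $k \neq l$. Using the canonical purification $\ket{\Phi}^{RBE} = \sum_k \sqrt{\lambda_k}\ket{v_k}^{RB}\ket{k}^E$ (unique up to isometry on $E$), this yields
\[
  \rho^{RE} = \sum_{k,l}\sqrt{\lambda_k\lambda_l}\bigl(\tr_B \ket{v_k}\bra{v_l}^{RB}\bigr)\otimes\ket{k}\bra{l}^E = \rho^R \otimes \rho^E.
\]

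For the last step, both $\ket{\Phi}^{RBE}$ and $\ket{\phi}^{RA}\otimes\ket{\chi}^{FE}$ (with $\ket{\chi}^{FE}$ any purification of $\rho^E$ on an ancilla $F$) purify the product state $\rho^R \otimes \rho^E$ on $RE$, so Uhlmann's theorem provides an isometry $W : B \hookrightarrow A\otimes F$ with $(\1_R \otimes W \otimes \1_E)\ket{\Phi}^{RBE} = \ket{\phi}^{RA}\otimes\ket{\chi}^{FE}$. Setting ${\cal D}(\sigma^B) := \tr_F W\sigma W^\dagger$ and tracing out $F$ and $E$ on both sides of this isometric identity gives $(\id_R \otimes {\cal D})(\rho^{RB}) = \phi^{RA}$, as required. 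The main obstacle is the second step, extracting the product structure of $\rho^{RE}$ from the fact that $\psi^R = \rho^R$ for all pure $\ket{\psi}$ in the support of $\rho^{RB}$; my superposition argument handles it elementarily, but an alternative is to invoke the Koashi--Winter monogamy relation $E_F(R:B) + C_{cl}(R\leftarrow E) = S(R)$, which converts the hypothesis immediately into $C_{cl}(R\leftarrow E) = 0$ and hence into $\rho^{RE} = \rho^R \otimes \rho^E$.
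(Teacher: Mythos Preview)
Your proof is correct. The paper does not actually supply a proof of this lemma; it is quoted from Schumacher and Westmoreland~\cite{SW-QEC}. However, the key step of your argument---deducing $\rho^{RE}=\rho^R\otimes\rho^E$ from $E_F(R:B)=S(R)$---does appear in the paper, in the proof of Lemma~\ref{lemma:IG} in the Appendix, and there it is carried out precisely via the alternative you mention at the end: the Koashi--Winter identity $C_\leftarrow(\rho^{RE})=S(R)-E_F(R:B)$ gives $C_\leftarrow(\rho^{RE})=0$, hence $\rho^{RE}$ is product, and then Uhlmann's theorem is invoked just as in your third step.

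Your main route is genuinely different and more self-contained. Instead of importing the Koashi--Winter monogamy relation, you argue directly that $E_F(R:B)=S(R)$ forces every pure vector in the support of $\rho^{RB}$ to have $R$-marginal equal to $\rho^R$ (via the relative-entropy identity $S(\rho^R)-\sum_i p_i S(\phi_i^R)=\sum_i p_i D(\phi_i^R\|\rho^R)$), and then the superposition trick with $(\ket{v_k}+\ket{v_l})/\sqrt{2}$ and $(\ket{v_k}+i\ket{v_l})/\sqrt{2}$ annihilates the cross terms $\tr_B\ket{v_k}\bra{v_l}$, yielding $\rho^{RE}=\rho^R\otimes\rho^E$ by hand. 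This buys you an elementary proof that requires no external entanglement-monogamy input; the paper's Koashi--Winter route, by contrast, is a one-line reduction once that relation is granted. Both arguments conclude identically via Uhlmann, and your construction of the recovery map ${\cal D}(\sigma)=\tr_F W\sigma W^\dagger$ from the isometry $W:B\hookrightarrow A\otimes F$ is exactly right (choosing $F$ large enough that $W$ can be an isometry).
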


\begin{corollary}
If a quantum channel ${\cal N}$ is not perfect for transmitting quantum 
information, then its potential single-letter quantum capacity is not maximal, either:
\[
  Q^{(1)}({\cal N}) < \log d_{\min} \ \Longrightarrow\  Q^{(1)}_p({\cal N}) < \log d_{\min}.
\]
In particular, if a quantum channel is not perfect for transmitting 
quantum information, then it cannot be activated to a perfect one by 
any zero-quantum-capacity channels.
\end{corollary}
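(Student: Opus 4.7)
The plan is to argue by contrapositive: assume $Q^{(1)}_p(\cN) = \log d_{\min}$ and deduce $Q^{(1)}(\cN) = \log d_{\min}$. By Theorem \ref{thm:upper-Q_p} we have $Q^{(1)}_p(\cN) \le E_F(\cN)$, while the elementary bound $E_F(\cN) \le \log d_{\min}$ holds because every pure state appearing in a decomposition of $(\id \otimes \cN)(\phi^{RA})$ has Schmidt rank at most $d_{\min}$ (the $B$-side is bounded by $d_B$, while the $R$-side support is contained in $\mathrm{supp}(\phi^R)$ of dimension at most $d_A$). Hence $E_F(\cN) = \log d_{\min}$, and by the minimax identity of Lemma \ref{minmax} some pure input $\ket{\phi}^{RA}$ realises $E_F(\rho^{RB}) = \log d_{\min}$ with $\rho^{RB} = (\id \otimes \cN)(\phi^{RA})$. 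Any optimal decomposition $\rho^{RB} = \sum_i p_i \proj{\phi_i}^{RB}$ then consists entirely of $\ket{\phi_i}^{RB}$ of Schmidt rank $d_{\min}$, since each $S(\phi_i^B) \le \log d_{\min}$ and they average to $\log d_{\min}$.

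In the input-dimension case $d_{\min} = d_A$, the chain $\log d_A = E_F(\rho^{RB}) \le S(\phi^R) = S(\phi^A) \le \log d_A$ collapses to equalities, so $\phi^A$ is maximally mixed and $E_F(R:B) = S(R)$ on the output. Lemma \ref{lemma:SW-QEC} then supplies a decoding $\cD : B \to A$ with $(\id \otimes \cD)(\rho^{RB}) = \phi^{RA}$; since $\phi^A$ has full rank, $\cD \circ \cN$ must coincide with $\id_A$, so $\cN$ is itself an isometric embedding and $Q^{(1)}(\cN) = \log d_A$.

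For the output-dimension case $d_{\min} = d_B \le d_A$, Lemma \ref{lemma:SW-QEC} does not apply directly since $S(\phi^R)$ may exceed $\log d_B$. I would instead route the argument through the canonical Hadamard lifting $\cN^\uparrow$ constructed in this subsection. Since $Q^{(1)}(\cN^\uparrow) = \sum_i p_i S(\rho_i^B)$ attains its minimum over Kraus representations at $E_F(\cN) = \log d_B$ (Lemma \ref{minmax}), at the optimal representation every active Kraus operator $K_i$ satisfies $K_i \phi^A K_i^\dagger = \frac{p_i}{d_B}\1_B$, so each has rank exactly $d_B$ and $(K_i \otimes \1_R)\ket{\phi}$ is maximally entangled between $R$ and $B$. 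The plan is to exploit the freedom in the Kraus representation to show that, after a suitable unitary rotation of the Kraus basis, only a single operator $K_{i^*}$ remains active on a $d_B$-dimensional subspace $V \subseteq A$, where it acts isometrically; restricting $\cN$ to $V$ then realises perfect transmission of $\log d_B$ qubits and forces $Q^{(1)}(\cN) = \log d_B$.

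The main obstacle is this last step: starting from the optimality conditions above, one must prove that the Kraus freedom suffices to concentrate the weight on a single operator whose support on $A$ is $d_B$-dimensional. I expect this to follow from tracking infinitesimal variations of $\sum_i p_i S(\rho_i^B)$ at the Kraus-representation optimum, together with the maximally entangled structure of each $\ket{\phi_i}^{RB}$; once the isometric subspace is in hand, the corollary closes since $Q^{(1)}(\cN) \le \log d_{\min}$ is trivial.
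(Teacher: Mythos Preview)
Your treatment of the case $d_{\min}=d_A$ matches the paper's: both reach $E_F(R:B)=S(R)$ and invoke Lemma~\ref{lemma:SW-QEC}. (One small inaccuracy: $\cD\circ\cN=\id_A$ does not force $\cN$ to be a \emph{single} isometry---it could be a random mixture of isometries with mutually orthogonal ranges---but $Q^{(1)}(\cN)=\log d_A$ follows anyway from the data-processing inequality for coherent information, so the conclusion stands.)

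For $d_{\min}=d_B$ your argument genuinely diverges from the paper's and, as you acknowledge, is incomplete. You correctly deduce that at the optimal Kraus representation every active $K_i$ sends $\phi^A$ to a rescaled maximally mixed state on $B$, but the step ``rotate the Kraus basis so that only one operator survives on a $d_B$-dimensional subspace'' is not justified, and the suggested variational argument is too vague to close it.

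The paper bypasses this obstacle with a short structural argument. From the same maximiser $\ket{\phi}^{RA}$ with $E_F(\rho^{RB})=\log d_B$: since $E_F$ is the \emph{infimum} over pure-state decompositions and each term satisfies $S(\phi_i^B)\le\log d_B$, \emph{every} decomposition of $\rho^{RB}$ has every component with maximally mixed $B$-marginal. Such decompositions correspond to rank-one measurements on $E$ in the purification $\ket{\phi}^{RBE}=(\1\otimes U)\ket{\phi}^{RA}$, so the post-measurement $B$-marginal is $\tfrac{1}{d_B}\1$ regardless of the outcome; hence $\rho^{BE}=\tfrac{1}{d_B}\1_B\otimes\rho^E$ is a product state. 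Uhlmann's theorem then gives a unitary $V:R\to R_1\otimes R_2$ with $(V\otimes\1)\ket{\phi}^{RBE}=\ket{\phi_1}^{R_1B}\otimes\ket{\phi_2}^{R_2E}$. Fixing one Schmidt vector $\ket{e_0}^E$ of $\ket{\phi_2}$ and pulling back through $U^\dagger$ yields an input $\ket{\psi}^{RA}$ supported on the $d_B$-dimensional subspace $\operatorname{span}\{U^\dagger(\ket{e_i}^B\ket{e_0}^E)\}\subset A$, on which $U$ maps into $B\otimes\ket{e_0}^E$; the environment output is pure, so $I_c=S(B)-S(E)=\log d_B$ and $Q^{(1)}(\cN)=\log d_B$. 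This is precisely the isometric subspace you were after, obtained without any Kraus-variational analysis.
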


\begin{proof} 
Suppose that the potential quantum capacity is $\log d_{\min}$, then from the upper 
bound in Theorem~\ref{thm:upper-Q_p} we know that $E_F({\cal N})=\log d_{\min}$. 

In the case of $d_{\min}=d_{in}$, from Lemma~\ref{lemma:SW-QEC},  we know that the channel operation can be perfectly corrected by a suitable operation ${\cal D}$ acting on B that means the channel is already perfect.  

In the case of $d_{\min}=d_{out}$, denoting the isometry of the channel $\cN:A\rightarrow B$ by $U:A \hookrightarrow B\otimes E$, there exists hence an input $\ket{\phi}^{RA}$ 
such that the output state plus the environment is 
$\ket{\phi}^{RBE}=U \ket{\phi}^{RA}$, where $\rho^{RB}=\id\otimes {\cal N}(\phi_{RA})$ 
satisfying $E_F(\rho^{RB}) = S(B) = \log d$. This implies that $\rho^{BE}$ is a product 
state. From the Uhlmann theorem \cite{Uhlmann}, 
we know there is a unitary $V:R\rightarrow R_1\otimes R_2$ such that 
\[\begin{split}
  V \ket{\phi}^{RBE} = V \otimes U \ket{\phi}^{RA}
          &=   \ket{\phi}_1^{R_1B} \otimes \ket{\phi}_2^{R_2E},\\
          &=   \sum_{i,j} \frac{1}{\sqrt{d}} \lambda_j \ket{i}^{R_1}\ket{e_i}^B\ket{j}^{R_2}\ket{e_j}^E,
\end{split}\]
where $\{\ket{i}^{R_1},\ket{e_i}^B\}$ are the bases in the Schmidt decomposition of $\ket{\phi}_1^{R_1B}$, and $\{\ket{j}^{R_2},\ket{e_j}^E\}$ for $\ket{\phi}_2^{R_2E}$.
So the input state $V \otimes \1 \ket{\phi}^{RA}=\sum_{i} \frac{1}{\sqrt{d}} \ket{i}^{R_1}\ket{j}^{R_2} U^{\dagger}(\ket{e_i}^B\ket{e_j}^{E})$ will give a product state
as output. Now from this input we construct a new input 
\[
  \ket{\psi}^{RA}
    = \sum_{i} \frac{1}{\sqrt{d}} \ket{i}^{R_1}\ket{0}^{R_2} U^{\dagger}(\ket{e_i}^B\ket{e_0}^{E}),
\]
yielding the output
\[
  U \ket{\psi}^{RA} = \sum_{i} \frac{1}{\sqrt{d}} \ket{i}^{R_1}\ket{0}^{R_2}\ket{e_i}^B\ket{e_0}^{E},
\]
which is the desired output of product state between $B$ and $E$.
This means $Q^{(1)}({\cal N})=\log d$, i.e.~${\cal N}$ is noiseless already.
\end{proof}

\begin{remark}
\normalfont 
We know that the channel can be very entangled even though its quantum capacity 
is zero. It is very difficult to characterize channels with zero quantum capacity.
So it seems that it is hard to say whether a noisy channel can be activated 
into a noiseless one under the assistance of zero-quantum-capapcity channels. 
However from the notion of potential quantum capacity, we can answer this question
in the negative.
\end{remark}

\subsection{Potential private capacity}
\label{subsec:P}
In this section, we repeat the analysis of the preceding subsection,
but for the potential private capacity. We shall show that, as for $Q$,
the potential private capacity cannot be maximal without it already 
being the single-letter private capacity.
Especially we prove that the private capacity of Hadamard channels is strongly
additive.

\begin{definition}
Specializing Definition~\ref{def:f-pot}
to the case $f\equiv P$, we obtain the \emph{potential private capacity}
\be
  P_p({\cal N}) = \sup_{\cal M} \bigl[ P({\cal N}\otimes{\cal M})-P({\cal M}) \bigr],
\ee
and the \emph{potential single-letter private capacity}
\be
  P^{(1)}_p({\cal N}) = \sup_{\cal M} \bigl[ P^{(1)}({\cal N}\otimes{\cal M})-P^{(1)}({\cal M}) \bigr].
\ee
\end{definition}

By Lemma~\ref{lemma:chain}, we have
\be
  \label{p-order}
  P^{(1)}({\cal N}) \le P({\cal N}) \le P_p({\cal N}) \le P^{(1)}_p({\cal N}).
\ee

As for in the quantum case, we aim to lift channels to strongly 
additive ones, so as to obtain single-letter upper bounds on the
potential private capacity. Indeed, we can extend Proposition~\ref{prop:H-strong-add}
to the private capacity:

\begin{proposition}
  If ${\cal N}$ is a Hadamard channel, then 
  $P^{(1)}$ is strongly additive:
  $P^{(1)}({\cal N}\otimes{\cal M})=P^{(1)}({\cal N})+P^{(1)}({\cal M})$ 
  for any contextual channel ${\cal M}$.
\end{proposition}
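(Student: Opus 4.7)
The plan is to replicate the structure of the proof of Proposition~\ref{prop:H-strong-add}, replacing the coherent information by the privacy functional $I(T\!:\!B)-I(T\!:\!E)$. The easy direction $P^{(1)}({\cal N}\otimes{\cal M})\ge P^{(1)}({\cal N})+P^{(1)}({\cal M})$ follows as usual from taking product inputs with independent classical labels on the two channels, so the work is entirely in the opposite inequality.

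Fix an arbitrary ensemble $\{p_t,\rho_t^{A_1A_2}\}$ for ${\cal N}\otimes{\cal M}$. After the Hadamard isometry $V=\sum_i\ket{i}^{B_1}\ket{\phi_i}^{E_1}\bra{\tilde{\psi_i}}^{A_1}$ on $A_1$ and the isometry $W$ on $A_2$, I would insert the same coherent copy $V':\ket{i}^{B_1}\mapsto\ket{i}^Y\ket{i}^Z$ used in the proof of Proposition~\ref{prop:H-strong-add}. The key features of the resulting extended state $\rho^{TYZE_1B_2E_2}$ are: (a) the receiver's register is isometric to the original, so entropies involving $B_1B_2$ are unchanged; and (b) $S(E_1|Y)=0$, because conditional on the classical outcome $Y=i$ the Hadamard structure forces Eve's register into the pure state $\ket{\phi_i}$.

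Now split the privacy via the chain rule,
\[
  I(T\!:\!B_1B_2)-I(T\!:\!E_1E_2) = [I(T\!:\!B_1)-I(T\!:\!E_1)]+[I(T\!:\!B_2|B_1)-I(T\!:\!E_2|E_1)].
\]
The first bracket is the privacy of ${\cal N}$ on the marginal ensemble $\{p_t,\rho_t^{A_1}\}$ and is therefore bounded by $P^{(1)}({\cal N})$. The central task is then to establish
\[
  I(T\!:\!B_2|B_1)-I(T\!:\!E_2|E_1)\;\le\; I(TY\!:\!B_2)-I(TY\!:\!E_2),
\]
because the right-hand side is the privacy of ${\cal M}$ on the refined ensemble $\{p_tp_{i|t},\mu_{t,i}^{A_2}\}$ with joint classical label $(T,Y)$---where $\mu_{t,i}^{A_2}$ is the state on $A_2$ conditional on $T=t$ and Hadamard outcome $Y=i$---and is thus $\le P^{(1)}({\cal M})$.

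The main obstacle is exactly this last inequality: the naive data-processing bounds $S(B_2|B_1)\le S(B_2|Y)$ and $S(E_2|E_1)\ge S(E_2|Y)$ that drove the ${Q}^{(1)}$ proof flip sign when one further conditions on $T$ inside the conditional mutual informations, and do not assemble on their own into a bound on the difference. The Hadamard-specific property $S(E_1|Y)=0$---that Eve's register is a \emph{deterministic} function of the Hadamard measurement outcome, not merely a degraded version of it---is what closes the gap: it yields $I(T\!:\!E_2|YE_1)=I(T\!:\!E_2|Y)$, and together with the identity $I(T\!:\!B_2|B_1)=I(T\!:\!B_2|B_1Y)$ (immediate since $Y$ is a function of $B_1$) a careful chain-rule manipulation produces the required bound. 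This matches the Hadamard-specific refinement of the Devetak--Shor-style additivity argument pointed to by the authors after Proposition~\ref{prop:H-strong-add}, and is in line with the more general results of~\cite{Bradler,WildeHsieh}.
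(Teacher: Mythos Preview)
Your high-level plan uses the right Hadamard-specific ingredients (the copy register $Y$ and the fact that $E_1$ is a deterministic preparation from $Y$), but your decomposition differs from the paper's and your sketch has a real gap at the decisive step. That gap is precisely the displayed inequality you call the ``central task'',
\[
  I(T:B_2|B_1) - I(T:E_2|E_1) \;\le\; I(TY:B_2) - I(TY:E_2).
\]
The two facts you isolate, $I(T:B_2|B_1)=I(T:B_2|B_1Y)$ and $I(T:E_2|YE_1)=I(T:E_2|Y)$, are correct but do not assemble into this bound by any short chain-rule rearrangement: the asymmetric conditioning ($B_1\cong YZ$ on Bob's side versus the strictly weaker $E_1$ on Eve's) leaves cross terms of no definite sign. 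Declaring that ``a careful chain-rule manipulation produces the required bound'' is exactly the part of the proof that must be written out, and it is not evident that it goes through with only the tools you list.

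The paper's proof does not target your inequality at all. Instead it \emph{purifies} each $\rho_t^{A_1A_2}$ by a reference $R$, decomposes $I(T:B_1B_2)-I(T:E_1E_2)$ via entropies rather than the mutual-information chain rule, bounds the ${\cal N}$ piece by the \emph{coherent information} $S(B_1)-S(E_1)\le Q^{(1)}({\cal N})=P^{(1)}({\cal N})$ (not by $I(T:B_1)-I(T:E_1)$ as you do), and bounds the ${\cal M}$ piece by $[I(T:B_2)-I(T:E_2)]\,|\,Y\le P^{(1)}({\cal M})$. What remains is an explicit residual $[S(B_2Y)-S(E_2Y)-S(B_1B_2)+S(E_1E_2)]\,|\,T$, which for each $t$ equals $-I(R:Y|E_1E_2)\le 0$ by purity of the per-$t$ state. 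The purification $R$ is what closes the argument, and it is entirely absent from your sketch; without it (or an equivalent device) the proof does not close.
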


\begin{proof} 
The ``$\ge$'' part is trivial and we only need to prove the ``$\le$'' part.
Suppose the isometry of the Hadamard channel 
${\cal N}$ is $V: A_1\hookto B_1\otimes E_1$, of the form (\ref{eq:iso-H}) 
as before, 
and the isometry of ${\cal M}$ is $W: A_2\hookto B_2\otimes E_2$. For the input state ensemble $\{p_t,\rho_t^{A_1A_2}\}$, we construct the classical-quantum (cq) state with the reference system R that purifies each $\rho_t^{A_1A_2}$,  
\[
  \sum_{t} p_t \proj{t}^{T} \otimes \proj{\phi_t}^{RA_1A_2},
\]
which is mapped by $V \otimes W$ to
\[
  \sum_{t} p_{t} \proj{t}^{T} \otimes \proj{\phi_t}^{RB_1E_1B_2E_2}.
\]
Here,
\be
\label{eq:state}
  \ket{\phi_t}^{RB_1E_1B_2E_2}
      = \sum_i \sqrt{q_{i|t}} \ket{i}^{B_1} \ket{\phi_i}^{E_1} \ket{\psi_{i|t}}^{RB_2E_2}.
\ee

Using the isometry $\ket{i}^{B_1} \mapsto \ket{i}^{Y} \ket{i}^{Z}$, and the notation $F(X)|T=\sum p_tF(X_t)$, we get
\[\begin{split}
 &\quad~I(T:B_1B_2) -I(T:E_1E_2)\\
               &=   S(B_1B_2)-S(E_1E_2)-(S(TB_1B_2)-S(TE_1E_2)),\\
              &=   S(B_1)-S(E_1)+S(B_2|B_1)-S(E_2|E_1)\\
              &~-(S(B_1B_2|T)-S(E_1E_2|T)),\\
              &\le S(B_1)-S(E_1)+S(B_2|Y)-S(E_2|Y)\\
              &~-(S(B_1B_2|T)-S(E_1E_2|T)),\\
              &=   S(B_1)-S(E_1)+[I(T:B_2)-I(T:E_2)]|Y\\
              &~+(S(TB_2|Y)-S(TE_2|Y))-(S(B_1B_2|T)-S(E_1E_2|T)),\\
              &=   S(B_1)-S(E_1)+[I(T:B_2)-I(T:E_2)]|Y\\
              &~+(S(B_2Y|T)-S(E_2Y|T))-(S(B_1B_2|T)-S(E_1E_2|T)),\\
              &=   S(B_1)-S(E_1)+[I(T:B_2)-I(T:E_2)]|Y\\
              &~+[S(B_2Y)-S(E_2Y)-S(B_1B_2)+S(E_1E_2)]|T,\\
              &\le P^{(1)}({\cal N})+P^{(1)}({\cal M})\\
              &~+[S(B_2Y)-S(E_2Y)-S(B_1B_2)+S(E_1E_2)]|T,
\end{split}\]
where the first inequality comes from $S(B_2|B_1)\le S(B_2|Y)$ and $S(E_2|E_1)\ge S(E_2|Y)$, which hold for the same reason in the proof of Prop. \ref{prop:H-strong-add}.
Next we show that each term in the average over $T$ is non-positive.
Indeed, we evaluate the term in the pure state of the form (\ref{eq:state}) and notice that $S(E_2Y)=S(E_1E_2Y)$. Then we have
\[\begin{split}
  &\quad~ S(B_2Y)-S(E_2Y)-S(B_1B_2)+S(E_1E_2)\\
  & = -I(R:Y|E_1E_2)\leq 0,
\end{split}\]
which concludes the proof.
\end{proof}

\medskip
An immediate corollary is the following.
\begin{corollary}
  \label{cor:Hadamard-P_p}
  The potential private capacity (and the potential single-letter private capacity)
  of a Hadamard channel ${\cal N}$ is equal to its private capacity, which
  in turn equals $Q^{(1)}({\cal N})$:
  \[
    P^{(1)}_p({\cal N}) = P_p({\cal N}) = P({\cal N}) = Q^{(1)}({\cal N}).
  \]
\end{corollary}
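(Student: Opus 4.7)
The plan is to read off the corollary directly from the preceding Proposition, the chain inequality in Lemma~\ref{lemma:chain} (specialized to $f \equiv P$ as displayed in (\ref{p-order})), and Smith's result relating the quantum and private capacities of degradable channels.

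First, I would use the strong additivity just established: for the Hadamard channel ${\cal N}$ and any contextual ${\cal M}$,
\[
  P^{(1)}({\cal N}\otimes{\cal M}) - P^{(1)}({\cal M}) = P^{(1)}({\cal N}).
\]
Taking the supremum over ${\cal M}$ in the definition of $P^{(1)}_p$ immediately yields $P^{(1)}_p({\cal N}) = P^{(1)}({\cal N})$.

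Second, I would plug this into the chain (\ref{p-order}), namely $P^{(1)}({\cal N}) \le P({\cal N}) \le P_p({\cal N}) \le P^{(1)}_p({\cal N})$. Since the two ends now coincide, all four quantities collapse to $P^{(1)}({\cal N})$.

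Finally, to identify the common value with $Q^{(1)}({\cal N})$, I would appeal to the fact that every Hadamard channel is degradable (its complementary channel is entanglement-breaking, hence it can be simulated by measuring in the $\ket{i}^B$ basis and preparing $\ket{\phi_i}^E$, as noted right after equation (\ref{eq:iso-H})). Then Smith's lemma (the second lemma in Section~\ref{sec:defs}) gives $P({\cal N}) = Q^{(1)}({\cal N})$, completing the string of equalities. No step is particularly delicate here: the real work was done in the preceding Proposition establishing strong additivity of $P^{(1)}$ on Hadamard channels, and in the prior lemmas being invoked; the corollary is a clean two-line consequence.
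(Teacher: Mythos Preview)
Your proposal is correct and matches exactly the intended argument: the paper states the corollary as ``an immediate corollary'' of the preceding proposition without giving a separate proof, and the combination of strong additivity of $P^{(1)}$ for Hadamard channels, the chain (\ref{p-order}), and Smith's lemma on degradable channels is precisely the two-line derivation the reader is expected to supply.
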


\begin{theorem} 
  For any channel ${\cal N}$, we have the upper bound
  $P_p({\cal N}) \le P_p^{(1)}({\cal N}) \le E_F({\cal N})$.
\end{theorem}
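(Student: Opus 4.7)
The plan is to mirror exactly the proof strategy of Theorem~\ref{thm:upper-Q_p}, replacing $Q^{(1)}$ by $P^{(1)}$ throughout. The first inequality $P_p({\cal N}) \le P_p^{(1)}({\cal N})$ is nothing but the last link of the chain (\ref{p-order}), which itself is a direct application of Lemma~\ref{lemma:chain} to the super-additive function $f \equiv P^{(1)}$, so it is already in hand.

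For the nontrivial inequality $P_p^{(1)}({\cal N}) \le E_F({\cal N})$, I would proceed in four forward steps. First, fix an arbitrary contextual channel ${\cal M}$ and construct the canonical Hadamard lifting ${\cal N}^{\uparrow}$ of ${\cal N}$ from a Kraus decomposition $\{K_i\}$ of ${\cal N}$, exactly as in Section~\ref{subsec:Q}. Since ${\cal N}$ is simulable from ${\cal N}^{\uparrow}$ by pre-/post-processing (trace out the coherent copy $B'$), the same simulation applied to the first tensor factor gives
\[
  P^{(1)}({\cal N}\otimes{\cal M}) \le P^{(1)}({\cal N}^{\uparrow}\otimes{\cal M}).
\]
Second, invoke the proposition just proved (strong additivity of $P^{(1)}$ for Hadamard channels) to obtain
\[
  P^{(1)}({\cal N}^{\uparrow}\otimes{\cal M}) = P^{(1)}({\cal N}^{\uparrow}) + P^{(1)}({\cal M}).
\]
Third, apply Corollary~\ref{cor:Hadamard-P_p} to replace $P^{(1)}({\cal N}^{\uparrow})$ by $Q^{(1)}({\cal N}^{\uparrow})$. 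Fourth, carry over the computation that was already performed in the proof of Theorem~\ref{thm:upper-Q_p}: the single-letter quantum capacity of the canonical lifting equals the average entanglement of the output ensemble, and optimising the Kraus representation (together with the minimax identity of Lemma~\ref{minmax}) gives $Q^{(1)}({\cal N}^{\uparrow}) = E_F({\cal N})$.

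Putting the four steps together,
\[
  P^{(1)}({\cal N}\otimes{\cal M}) - P^{(1)}({\cal M}) \le E_F({\cal N}),
\]
and taking the supremum over ${\cal M}$ on the left gives the desired bound $P_p^{(1)}({\cal N}) \le E_F({\cal N})$. I do not anticipate a genuine obstacle, because all of the heavy lifting has been done beforehand: the strong additivity of $P^{(1)}$ for Hadamard channels is the preceding proposition, the equality $P^{(1)}(\mathrm{Hadamard}) = Q^{(1)}(\mathrm{Hadamard})$ is Corollary~\ref{cor:Hadamard-P_p}, and the evaluation of the optimal canonical lifting as $E_F({\cal N})$ was already carried out in Section~\ref{subsec:Q}. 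The only point that deserves a sentence of justification is the monotonicity step $P^{(1)}({\cal N}\otimes{\cal M}) \le P^{(1)}({\cal N}^{\uparrow}\otimes{\cal M})$, which follows from the data-processing property of the private information under post-processing of the receiver.
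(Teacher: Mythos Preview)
Your proposal is correct and follows exactly the paper's own argument: the paper's proof simply says that for a Hadamard channel the potential private capacity equals the quantum capacity (Corollary~\ref{cor:Hadamard-P_p}) and that ``the rest is the same as the quantum case,'' i.e.~the proof of Theorem~\ref{thm:upper-Q_p}. You have merely unpacked those four steps explicitly, including the monotonicity of $P^{(1)}$ under the lifting and the strong-additivity proposition, so there is nothing to add.
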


\begin{proof} 
Notice that for the Hadamard channel, the potential capacity is equal 
to the quantum capacity and the rest is the same as the quantum case,
i.e.~the proof of Theorem~\ref{thm:upper-Q_p}.
\end{proof}

\medskip
As in the quantum case, we obtain the following immediate corollary:

\begin{corollary}
If a quantum channel is not perfect for transmitting private 
information, then it cannot be activated to the perfect one by any contextual channels.
\end{corollary}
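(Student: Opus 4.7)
The plan is to mirror the proof of the analogous corollary in Subsection~\ref{subsec:Q}, exploiting two ingredients already available: the just-established upper bound $P^{(1)}_p({\cal N}) \le E_F({\cal N})$ is formally identical to the one for the quantum case, and one always has $Q^{(1)}({\cal N}) \le P^{(1)}({\cal N})$ (every coherent code is in particular a private code). Thus the private corollary should follow from the quantum corollary almost for free.

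First I would argue by contrapositive. Suppose $P_p({\cal N}) = \log d_{\min}$; then by Lemma~\ref{lemma:chain} and the theorem just proved,
\[
  \log d_{\min} = P_p({\cal N}) \le P^{(1)}_p({\cal N}) \le E_F({\cal N}) \le \log d_{\min},
\]
so $E_F({\cal N}) = \log d_{\min}$. From here the hypothesis $E_F({\cal N}) = \log d_{\min}$ is the same one that drove the quantum corollary, and I propose to reuse that case analysis verbatim.

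Second, I would split into the two cases exactly as before. When $d_{\min} = d_{in}$, apply Lemma~\ref{lemma:SW-QEC}: the condition $E_F(R:B) = S(R) = \log d_{\min}$ at the optimizing purification yields a decoding CPTP map ${\cal D}$ with ${\cal D}\circ{\cal N} = \id$, so $Q^{(1)}({\cal N}) = \log d_{\min}$. When $d_{\min} = d_{out}$, apply the Uhlmann construction from the quantum corollary: from $E_F = \log d_{\min}$ one extracts an input state producing a product output across the $B{:}E$ cut, hence maximizing coherent information at $\log d_{\min}$. In both cases $Q^{(1)}({\cal N}) = \log d_{\min}$.

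Third, the sandwich $\log d_{\min} = Q^{(1)}({\cal N}) \le P^{(1)}({\cal N}) \le P({\cal N}) \le \log d_{\min}$ forces $P({\cal N}) = \log d_{\min}$, contradicting the assumption that ${\cal N}$ is not perfect for private information. I do not expect a serious obstacle here: the entire argument is a reduction to the quantum case, and the only thing to check is that the Uhlmann-based extraction in the $d_{\min}=d_{out}$ branch depends only on the structure of the tripartite purification and on $E_F({\cal N}) = \log d_{\min}$, not on anything specific to quantum (as opposed to private) transmission. Hence the same construction applies without modification, and the strengthened statement $P^{(1)}({\cal N}) < \log d_{\min} \Longrightarrow P^{(1)}_p({\cal N}) < \log d_{\min}$ in fact drops out on the way.
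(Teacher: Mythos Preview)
Your proposal is correct and matches the paper's intended argument: the paper states the corollary as following ``as in the quantum case'' from the bound $P_p({\cal N}) \le P^{(1)}_p({\cal N}) \le E_F({\cal N})$, and you have simply spelled out that reduction explicitly, including the final sandwich via $Q^{(1)}({\cal N}) \le P^{(1)}({\cal N})$. There is no gap, and your observation that the strengthened single-letter statement $P^{(1)}({\cal N}) < \log d_{\min} \Longrightarrow P^{(1)}_p({\cal N}) < \log d_{\min}$ also falls out is correct.
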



\section{Discussion and open questions}
\label{sec:open}
\IEEEPARstart{W}{e} have introduced potential capacities of quantum channels as ``big brothers" of the plain capacities, to capture the degree of non-additivity of the latter in the most favorable context. By bootstrapping strongly additive channels, i.e. those whose plain capacity equals its potential version, we were able to give some general upper bounds on various potential capacities. While the potential concept makes sense for any capacity, here we focused on a few examples,
 basically the principal channel capacities C, Q and P. Our central result is that a noisy channel cannot be activated into a noiseless one by any 
contextual channel. This result holds for the classical, quantum, 
and private capacity, and improves upon previous statements. 
Notice that in the notion of potential capacity, a PPT-entanglement-binding 
channel may have positive potential quantum capacity. So it is tempting to speculate
whether all entangled channels have positive potential quantum capacity.
This is a big open question deserving of study in the future. 

Looking beyond capacities and at the tradeoff between different
resources, note that Hadamard channels that served us so
well in the treatment of the potential quantum and classical
capacities, also allow for a single-letter formula for the 
qubit-cbit-ebit tradeoff region~\cite{WildeHsieh}; in fact, Thm.~3
and Lemma 2 of that paper show that this region is strongly
additive regarding the tensor product of a Hadamard channel
with any other channel. Thus, the notion of lifting to a Hadamard
channel once more yields and outer bound on the potential
capacity region of the achievable triples $(q,c,e)$.

We have studied potential capacities only for the basic quantities, 
and one ($Q_A$) for which we could calculate the potential
capacity exactly. For most capacities, we may assume that it will be
prohibitive to calculate the potential version as well as
its plain version, so we have to be content with bounds. In the 
domain of zero-error information theory, other exact characterizations of 
some potential capacities are known~\cite{ADSW}.

In fact, in~\cite{ADSW}, differences between the capacity of
$K(\cN\ox\cM)$ and \emph{another} parameter for $\cM$, which
represents a more general value $V(\cM)$ of the channel, were
considered, where $K$ is superadditive and $V(\cM) \geq K(\cM)$.
Then, 
\[
  V^*(\cN) = \sup_{\cM} K(\cN\ox\cM) - V(\cM)
\]
is a kind of amortized value of $\cN$, the rationale being that
the gain from the ``profit'' $K(\cN\ox\cM)$ has to be offset by
the ``price'' $V(\cM)$ of the borrowed resource. Of particular
interest is the case of an economically fair pricing $V=V^*$, 
i.e.~of a situation where the same value $V(\cN)=V^*(\cN)$ quantifies the price
of resource when we have to borrow it, as well as the amortized
value in a suitable context. In the setting of zero-error
capacities, specifically $K=\log\alpha$ (with the independence number
$\alpha$), this has been shown to hold true for $V=\log\vartheta$
(the Lov\'{a}sz number). 
For Shannon theoretic capacities, i.e.~$K\in\{C,P,Q\}$ or similar, the 
existence and possible characterization of a value $V$ that yields a 
fair value $V=V^*$ is perhaps one of the most intriguing questions 
raised by the notion of potential capacities. For instance, for the
quantum capacity, it turns out that its symmetric side-channel assisted
version $Q_{ss}$ is such a fair price. Namely, for a given channel $\cN$ and 
any contextual channel $\cM$,
\[\begin{split}
  Q^{(1)}(\cN\ox\cM) &\leq Q(\cN\ox\cM),\\
  & \leq Q_{ss}(\cN\ox\cM) = Q_{ss}(\cN)+Q_{ss}(\cM),
\end{split}\]
thus
\[\begin{split}
  &\quad~\sup_{\cM} Q^{(1)}(\cN\ox\cM) - Q_{ss}(\cM) \\
  &\leq \sup_{\cM} Q(\cN\ox\cM) - Q_{ss}(\cM) \leq Q_{ss}(\cN).
\end{split}\]
On the other hand, restricting to symmetric channels $\cM$ in this optimization,
for which $Q_{ss}(\cM) = 0$, we attain equality asymptotically by definition of
the symmetric side-channel assisted quantum capacity. The same reasoning 
can be applied to the private capacity and the symmetric side-channel
assisted version $P_{ss}$~\cite{Smith-degradable-p}, so we have proved the
following.

\begin{theorem}
  For any channel $\cN$, 
  \begin{align*}
    Q_{ss}(\cN) &= \sup_{\cM} Q^{(1)}(\cN\ox\cM) - Q_{ss}(\cM), \\
    &= \sup_{\cM} Q(\cN\ox\cM) - Q_{ss}(\cM), \\
    P_{ss}(\cN) &= \sup_{\cM} P^{(1)}(\cN\ox\cM) - P_{ss}(\cM),\\
    &  = \sup_{\cM} P(\cN\ox\cM) - P_{ss}(\cM),
  \end{align*}
  and equality is asymptotically attained by symmetric channels $\cM$.
\end{theorem}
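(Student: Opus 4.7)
The strategy is to combine the strong additivity of $Q_{ss}$ (and of its private analogue $P_{ss}$) with the very definition of the symmetric side-channel assisted capacity as a supremum of $Q$ (respectively $P$) over symmetric contextual channels. I outline the proof for the quantum statement; the private case is identical after replacing $Q,Q^{(1)},Q_{ss}$ by $P,P^{(1)},P_{ss}$.

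For the upper bound the key input is that $Q_{ss}$ is strongly additive (Smith), so for every $\cM$ one has $Q_{ss}(\cN\ox\cM)=Q_{ss}(\cN)+Q_{ss}(\cM)$. Together with the chain $Q^{(1)}(\cN\ox\cM)\le Q(\cN\ox\cM)\le Q_{ss}(\cN\ox\cM)$, rearrangement gives
\[
  Q^{(1)}(\cN\ox\cM)-Q_{ss}(\cM)\le Q(\cN\ox\cM)-Q_{ss}(\cM)\le Q_{ss}(\cN)
\]
for every channel $\cM$. Taking the supremum on the left establishes the ``$\le$'' direction of both quantum equalities.

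For the matching lower bound in the $Q$ statement I restrict the supremum to symmetric contextual channels. Tensor products of symmetric channels are again symmetric (anti-degradability passes to tensor products by composing the anti-degrading maps), so $Q_{ss}(\cM)=0$ whenever $\cM$ is symmetric. Hence
\[
  \sup_{\cM\text{ symm}} \bigl[Q(\cN\ox\cM)-Q_{ss}(\cM)\bigr]=\sup_{\cM\text{ symm}} Q(\cN\ox\cM)=Q_{ss}(\cN),
\]
the last equality being the paper's definition of $Q_{ss}(\cN)$. This supremum is in general approached rather than attained, along a sequence of symmetric channels of growing dimension, which is the content of ``equality is asymptotically attained by symmetric channels $\cM$''. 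For the $Q^{(1)}$ lower bound I combine a near-optimal symmetric $\cM$ with a tensor-power trick: replacing $\cM$ by $\cM':=\cN^{\ox(k-1)}\ox\cM^{\ox k}$ (a legitimate contextual channel, no longer symmetric but still $Q_{ss}$-controllable), strong additivity of $Q_{ss}$ gives $Q_{ss}(\cM')=(k-1)Q_{ss}(\cN)$, while $\cN\ox\cM'=(\cN\ox\cM)^{\ox k}$, so
\[
  Q^{(1)}(\cN\ox\cM')-Q_{ss}(\cM')=Q^{(1)}\bigl((\cN\ox\cM)^{\ox k}\bigr)-(k-1)Q_{ss}(\cN).
\]
Choosing $\cM$ symmetric with $Q(\cN\ox\cM)$ close to $Q_{ss}(\cN)$ and $k$ large enough that $\frac1k Q^{(1)}((\cN\ox\cM)^{\ox k})$ is close to $Q(\cN\ox\cM)$, the right hand side tends to $Q_{ss}(\cN)$.

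The main obstacle is the quantitative diagonal-limit argument in the $Q^{(1)}$ case: one must balance the additive slack $(k-1)\bigl(Q_{ss}(\cN)-Q(\cN\ox\cM)\bigr)$ against the regularization error $kQ(\cN\ox\cM)-Q^{(1)}((\cN\ox\cM)^{\ox k})$. Neither slack admits a universal rate of convergence to zero, so the limits in $\cM$ and in $k$ must be interleaved carefully; this is where the only genuine subtlety of the word ``asymptotically'' in the statement lies, and where I would devote the bulk of the care in a careful write-up.
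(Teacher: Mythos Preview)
Your upper bound is exactly the paper's argument, and your lower bound for the $Q$-version is also correct in substance. The problem lies in the $Q^{(1)}$ lower bound, where your tensor-power trick does not converge. With $\cM$ symmetric and $\cM'=\cN^{\ox(k-1)}\ox\cM^{\ox k}$, write $a:=Q(\cN\ox\cM)\le b:=Q_{ss}(\cN)$ and $\epsilon'_k:=a-\tfrac1k Q^{(1)}\bigl((\cN\ox\cM)^{\ox k}\bigr)\ge 0$. Then
\[
  Q^{(1)}(\cN\ox\cM')-Q_{ss}(\cM')
  = Q^{(1)}\bigl((\cN\ox\cM)^{\ox k}\bigr)-(k-1)b
  = b - k(b-a) - k\epsilon'_k .
\]
Both slacks are multiplied by $k$: you only know $b-a$ can be made small in $\cM$ and $\epsilon'_k\to 0$ in $k$, not that $k(b-a)$ or $k\epsilon'_k$ can be made small. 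In fact, since $Q^{(1)}\bigl((\cN\ox\cM)^{\ox k}\bigr)\le kQ(\cN\ox\cM)=ka$, the right-hand side is always $\le b-k(b-a)$, which for any fixed non-optimal $\cM$ decreases in $k$ and is maximized at $k=1$. So the ``main obstacle'' you flag is not merely a subtlety of interleaving limits; the construction for $k>1$ simply never helps.

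The paper avoids all of this by recalling that the \emph{single-letter} formula is the very definition of $Q_{ss}$ used earlier (from \cite{Smith}): $Q_{ss}(\cN)=\sup_{\cM\text{ symm}}Q^{(1)}(\cN\ox\cM)$. Restricting the supremum to symmetric $\cM$ (for which $Q_{ss}(\cM)=0$) therefore gives the $Q^{(1)}$ lower bound immediately, and the $Q$ lower bound then follows from $Q^{(1)}\le Q$. In other words, you had the two cases backwards: the $Q^{(1)}$ identity is the one that holds \emph{by definition}, and the $Q$ identity is the derived one. Once you use this, no tensor-power or diagonal-limit argument is needed at all, and the ``asymptotically attained by symmetric $\cM$'' clause just refers to the sup over symmetric side channels in the definition of $Q_{ss}$.
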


If the quantum and private capacities have alternative fair pricing schemes
along these lines or if $Q_{ss}$ ($P_{ss}$) are unique, and whether there is an 
analogous statement for $C$ or $\chi$, remain open.


\section*{Acknowledgments}
We thank Ke Li, Graeme Smith and John Smolin for helpful 
discussions, and Runyao Duan, Marius Junge and Mark Wilde 
for interesting remarks, on the concept of potential
capacity and potential capacity regions, respectively.
Part of this work was done during the programme 
\emph{Mathematical Challenges in Quantum Information (MQI)} at the 
Isaac Newton Institute in Cambridge, whose hospitality was gratefully 
acknowledged, and where DY was supported by a Microsoft Visiting Fellowship. 
DY's work is supported by the ERC (Advanced Grant ``IRQUAT'') and the NSFC (Grant No. 11375165). 
AW's work is supported by the European Commission (STREP ``RAQUEL''), the European Research 
Council (Advanced Grant ``IRQUAT''), the Spanish MINECO (projects FIS2008-01236
and FIS2013-40627-P), with the support of FEDER funds, as well as by
the Generalitat de Catalunya CIRIT, project 2014-SGR-966.


\section*{Appendix}
Here we analyze the structure of the state satisfying $S(B)-S(BE)=G(B:E)$ or 
$S(B)-S(BE)=E_F(B:E)$. The general relation among these three quantities is 
$S(B)-S(BE)\le G(B:E)\le E_F(B:E)$, the first ``$\le$'' comes from Lemma~\ref{lemma:IG} 
and the second ``$\le$'' from Lemma~\ref{lemma:yang}. Obviously the condition 
$S(B)-S(BE)=E_F(B:E)$ implies $S(B)-S(BE)=G(B:E)$. Lemma~\ref{lemma:IGE} asserts 
that the latter also implies the former. 

\begin{lemma}
\label{lemma:IG}
For a mixed state $\rho^{BE}$, $S(B)-S(BE)\le C_{\leftarrow}(\rho^{BE})$, 
and equality holds iff there exists a unitary on $B$ such that 
$U_B\rho^{BE}U_B^{\dagger}=\rho^{B_L}\otimes\phi^{B_RE}$, where 
${\cal H}_B={\cal H}_{B_L}\otimes {\cal H}_{B_R}$ and $\phi^{B_RE}$ is pure.
\end{lemma}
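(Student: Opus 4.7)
The plan is to reduce both the inequality and its equality case to the Koashi--Winter monogamy relation applied to a purification $\ket{\psi}^{BEA}$ of $\rho^{BE}$. Koashi--Winter states that for any pure tripartite $\ket{\psi}^{XYZ}$,
\[
  E_F(\rho^{XY}) + C_{\leftarrow}(\rho^{XZ}) = S(\rho^X),
\]
with the POVM in $C_{\leftarrow}$ acting on $Z$. Taking $X=B$, $Y=A$, $Z=E$ gives $C_{\leftarrow}(\rho^{BE}) = S(B) - E_F(\rho^{BA})$. Using $S(\rho^A) = S(\rho^{BE})$ (purity) together with the universal bound $E_F(\rho^{BA}) \le S(\rho^A)$ (from concavity of the von Neumann entropy: any pure-state decomposition satisfies $\sum_i p_i S(\phi_i^A) \le S\bigl(\sum_i p_i \phi_i^A\bigr) = S(\rho^A)$), I obtain $C_{\leftarrow}(\rho^{BE}) \ge S(B) - S(BE)$, with equality iff $E_F(\rho^{BA}) = S(\rho^A)$.

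To characterise the saturation $E_F(\rho^{BA}) = S(\rho^A)$, I apply Koashi--Winter a second time to the same purification, this time with the roles of $A$ and $B$ swapped, i.e.\ $X=A$, $Y=B$, $Z=E$, yielding $E_F(\rho^{AB}) + C_{\leftarrow}(\rho^{AE}) = S(\rho^A)$, so the saturation is equivalent to $C_{\leftarrow}(\rho^{AE}) = 0$. A short argument shows this is in turn equivalent to the product condition $\rho^{AE} = \rho^A \ox \rho^E$: vanishing of the minimum forces $\sum_j r_j S(\rho_j^A) = S(\rho^A)$ for \emph{every} POVM on $E$ (each such sum being already bounded above by $S(\rho^A)$), and the equality case of concavity then forces $\rho_j^A = \rho^A$ for every outcome of every POVM. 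Equivalently $\tr_E[(\1\ox P)\rho^{AE}] = \tr(P\rho^E)\,\rho^A$ for every $P \ge 0$, and varying $P$ over a tomographically complete family pins down $\rho^{AE} = \rho^A \ox \rho^E$.

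Finally, I translate the product-marginal condition into the claimed structural form via Uhlmann's theorem. A canonical purification of $\rho^A \ox \rho^E$ is the tensor product $\ket{\phi_A}^{B_R A} \ox \ket{\phi_E}^{B_L E}$ of purifications of $\rho^A$ and $\rho^E$; Uhlmann provides a unitary $U_B$ on $B = B_L \ox B_R$ (after an innocuous dimension embedding if needed) mapping $\ket{\psi}^{BAE}$ onto this canonical form, and tracing out $A$ gives
\[
  U_B \rho^{BE} U_B^\dagger = \rho^{B_R} \ox \proj{\phi_E}^{B_L E},
\]
which is the claimed form after swapping the labels $B_L \leftrightarrow B_R$. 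For the converse of the equality statement a short direct check suffices: for $\rho^{BE} = \rho^{B_L} \ox \phi^{B_R E}$ with $\phi^{B_R E}$ pure one has $S(B) - S(BE) = S(\phi^{B_R}) = S(\phi^E)$, and measuring $E$ in the Schmidt basis of $\phi^{B_R E}$ yields post-measurement states on $B$ of the form $\rho^{B_L}\ox(\text{pure})$, so $\sum_j r_j S(\rho_j^B) = S(\rho^{B_L})$ and hence $C_{\leftarrow}(\rho^{BE}) = S(\phi^E) = S(B) - S(BE)$.

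The main obstacle I expect is the characterisation $C_{\leftarrow}(\rho^{AE}) = 0 \Longleftrightarrow \rho^{AE} = \rho^A \ox \rho^E$: one must argue that vanishing of the \emph{minimum} forces the equality in concavity for \emph{every} POVM on $E$ (not merely the optimiser), and then extract the full product structure by letting the POVM range tomographically. Everything else is a clean assembly of two applications of Koashi--Winter, Uhlmann's theorem, and the short direct verification of the converse.
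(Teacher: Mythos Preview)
Your proof is correct and follows essentially the same route as the paper: purify $\rho^{BE}$, apply the Koashi--Winter duality $C_{\leftarrow}(\rho^{BE}) = S(B) - E_F(B{:}A)$ together with $E_F(B{:}A) \le S(A) = S(BE)$, then for the equality case use Koashi--Winter once more to get $C_{\leftarrow}(\rho^{AE})=0$, hence $\rho^{AE}=\rho^A\otimes\rho^E$, and conclude via Uhlmann. Your write-up is in fact more complete than the paper's, which simply asserts that $C_{\leftarrow}(\rho^{AE})=0$ implies the product form and omits the converse direction, both of which you spell out carefully.
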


\begin{proof}
Consider the purification $\phi^{RBE}$ of the state $\rho^{BE}$. 
Then $C_{\leftarrow}(\rho^{BE})=S(B)-E_F(R:B)$ and $S(BE)=S(R)$. 
From the inequality $E_F(R:B)\le S(R)$, we arrive at 
$S(B)-S(BE)\le C_{\leftarrow}(\rho^{BE})$.

When the equality holds, this amounts to $E_F(R:B)=S(R)$. From the relation 
$C_{\leftarrow}(\rho^{RE})=S(R)-E_F(R:B)=0$, we get that $\rho^{RE}=\rho^R\otimes\rho^E$. 
From Uhlmann's theorem \cite{Uhlmann}, there exists a unitary $U_B$ such that 
$U_B\phi^{RBE}U_B^{\dagger}=\phi^{RB_L}\otimes\phi^{B_RE}$. Tracing out $R$ 
concludes the proof.
\end{proof}

\begin{lemma}
\label{lemma:IGE}
For a state $\rho^{BE}$, $S(B)-S(BE)\le G(B:E)$. If $S(B)-S(BE)=G(B:E)$, then $S(B)-S(BE)=E_F(B:E)$.
\end{lemma}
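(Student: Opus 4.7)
The plan is to establish the two claims in sequence, with the first inequality following from the concavity of conditional entropy combined with the per-component Lemma~\ref{lemma:IG}, and the equality case driven by a structure theorem one gets by unpacking exactly when each ingredient is tight.

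For the first inequality $S(B)-S(BE)\le G(B:E)$, I would write $S(B)-S(BE) = -S(E|B)$ and use the concavity of the conditional von Neumann entropy: for any decomposition $\rho^{BE}=\sum_i p_i\rho_i^{BE}$,
\[
  S(E|B)_\rho \ge \sum_i p_i S(E|B)_{\rho_i},
\]
which translates to $S(B)-S(BE) \le \sum_i p_i\bigl[S(B_i)-S(B_iE_i)\bigr]$. Then Lemma~\ref{lemma:IG} gives $S(B_i)-S(B_iE_i)\le C_{\leftarrow}(\rho_i^{BE})$ term by term, so taking the infimum over decompositions yields $G(B:E)$ on the right. This half is essentially mechanical.

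For the equality case, suppose $S(B)-S(BE)=G(B:E)$ and pick a decomposition $\{p_i,\rho_i^{BE}\}$ attaining $G$. Then every inequality above must be an equality. In particular, Lemma~\ref{lemma:IG} is saturated for each component with $p_i>0$: there is a unitary $U_{B_i}$ on $B$ and a factorization $\mathcal{H}_B=\mathcal{H}_{B_L}\otimes\mathcal{H}_{B_R}$ with
\[
  U_{B_i}\rho_i^{BE}U_{B_i}^{\dagger} = \rho_i^{B_L}\otimes \phi_i^{B_R E},
\]
where $\phi_i^{B_R E}$ is pure. Computing then gives $S(B_i)-S(B_iE_i)=S(\phi_i^{B_R})=S(\rho_i^E)$.

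The main step — and the one I expect to require the most care — is turning this structural information into a pure-state decomposition of $\rho^{BE}$ that achieves $E_F$. Concretely, decompose each $\rho_i^{B_L}=\sum_j q_{j|i}\proj{\psi_{j|i}}^{B_L}$ into pure states and define
\[
  \ket{\chi_{i,j}}^{BE} := U_{B_i}^{\dagger}\bigl(\ket{\psi_{j|i}}^{B_L}\otimes \ket{\phi_i}^{B_R E}\bigr).
\]
These form a pure-state decomposition of $\rho^{BE}$ with marginals satisfying $S(\chi_{i,j}^B) = S(\phi_i^{B_R})=S(\rho_i^E)$. Therefore
\[
  E_F(\rho^{BE}) \le \sum_{i,j} p_i q_{j|i} S(\chi_{i,j}^B) = \sum_i p_i S(\rho_i^E) = S(B)-S(BE).
\]
Combined with $E_F(\rho^{BE}) \ge G(B:E) = S(B)-S(BE)$ (the chain already noted in the statement), this forces $E_F(\rho^{BE}) = S(B)-S(BE)$, completing the proof. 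The only subtlety is checking carefully that the $B$-local unitaries $U_{B_i}$ do not spoil the computation of $S(\chi_{i,j}^B)$, which is immediate since they act trivially on the reduced state up to a unitary change of basis on $B$.
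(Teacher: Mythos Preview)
Your proof is correct and follows essentially the same route as the paper: the inequality via concavity of conditional entropy plus Lemma~\ref{lemma:IG} termwise, and the equality case by saturating both steps on an optimal $G$-decomposition. The only cosmetic difference is that the paper, after noting the Lemma~\ref{lemma:IG} structure forces $E_F(\rho_i^{BE})=S(B_i)-S(B_iE_i)$, simply invokes convexity of $E_F$ to close the chain, whereas you explicitly build the pure-state ensemble $\{p_i q_{j|i},\ket{\chi_{i,j}}\}$ that witnesses this same convexity bound; the two are equivalent.
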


\begin{proof}
Suppose that the optimal realization of $G(B:E)$ is the state ensemble $\{p_i,\rho_i^{BE}\}$.
Then,
\[
S(B)-S(BE) \le  \sum_i p_i[S(B_i)-S(BE_i)]\le \sum_i p_i C_{\leftarrow}(\rho^{BE}_i),
\]
where the first ``$\le$'' comes from the concavity of the conditional entropy and 
the second ``$\le$'' from Lemma~\ref{lemma:IG} above.

If  $S(B)-S(BE)=G(BE)$, then $S(B_i)-S(BE_i)=C_{\leftarrow}(BE_i)$ for each $\rho_i^{BE}$. From Lemma~\ref{lemma:IG}, the state $\rho_i^{BE}$ has the property $E_F(BE_i)=S(B_i)-S(BE_i)$. Then $S(B)-S(BE)\le E_F(BE)\le \sum_i p_i E_F(BE_i)=\sum_i p_i [S(B_i)-S(BE_i)]=S(B)-S(BE)$ and the proof ends.
\end{proof}

~\\
In fact, the constraint is so sharp that we can even learn the structure of the bipartite state. Proposition \ref{prop:structure} explains this and is also of independent interest. 

\begin{proposition}
\label{prop:structure}
A state $\rho_{BE}$ in the finite dimensional Hilbert space ${\cal H}_B\otimes {\cal H}_E$ 
satisfies $S(B)-S(BE)=E_F(BE)$, if and only if it is of the form 
\[
  \rho^{BE} = \bigoplus_i p_i\rho_{i}^{B_i^L}\otimes\phi_{i}^{B_i^RE}, 
\]
where $\phi_{i}^{B_i^RE}$ are pure states and the system $B$ is decomposed into 
the direct sum of tensor products
\[
  {\cal H}_{B} = \bigoplus_i {\cal H}_{B_i^L}\otimes{\cal H}_{B_i^R}.
\]
\end{proposition}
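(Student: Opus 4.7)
The sufficiency direction ($\Leftarrow$) is a routine computation using additivity of entropy. Assume $\rho^{BE} = \bigoplus_i p_i\, \rho_i^{B_i^L}\otimes \phi_i^{B_i^RE}$ with pure $\phi_i^{B_i^RE}$. Then $S(BE) = H(p) + \sum_i p_i S(\rho_i^{B_i^L})$, since the pure factors contribute nothing, while $S(B) = H(p) + \sum_i p_i[S(\rho_i^{B_i^L}) + S(\phi_i^{B_i^R})]$, giving $S(B)-S(BE) = \sum_i p_i S(\phi_i^{B_i^R})$. An explicit pure-state decomposition---diagonalize each $\rho_i^{B_i^L}$ and tensor with $\phi_i^{B_i^RE}$---yields $E_F(B:E) \le \sum_i p_i S(\phi_i^{B_i^R})$, and Lemma~\ref{lemma:IGE} supplies the matching lower bound, so equality holds.

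For necessity ($\Rightarrow$), the strategy is to bootstrap Lemmas~\ref{lemma:IG} and~\ref{lemma:IGE} themselves. The hypothesis together with the chain $S(B)-S(BE) \le G(B:E) \le E_F(B:E)$ forces $G(B:E) = S(B)-S(BE)$. Unwinding the proof of Lemma~\ref{lemma:IGE}, one obtains an ensemble $\{p_i,\sigma_i^{BE}\}$ realizing $G$ whose components satisfy $C_{\leftarrow}(\sigma_i^{BE}) = S(\sigma_i^B)-S(\sigma_i^{BE})$. Lemma~\ref{lemma:IG} then pins down the shape of each component: there is a unitary $U_i$ on $B$ with $U_i\sigma_i^{BE}U_i^\dagger = \rho_i^{B_i^L}\otimes\phi_i^{B_i^RE}$ and $\phi_i$ pure. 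Thus $\rho^{BE}$ already decomposes as a convex combination of the right microscopic pieces; what remains is to promote this convex combination to a genuine direct sum, i.e.~to place the blocks in mutually orthogonal subspaces of $\cH_B$.

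To that end, I track tightness of concavity on $B$ and on $BE$ simultaneously. The decomposition $\{p_i,\sigma_i^{BE}\}$ is simultaneously optimal for $E_F$ and saturates the lower bound $S(B)-S(BE)$, which forces
\[
  \sum_i p_i\,\bigl[S(\sigma_i^B)-S(\sigma_i^{BE})\bigr] = S(B)-S(BE),
\]
or equivalently $\chi\bigl(\{p_i,\sigma_i^{BE}\}\bigr) = \chi\bigl(\{p_i,\sigma_i^B\}\bigr)$: discarding $E$ preserves all Holevo information of the ensemble. Because each $\sigma_i^{BE}$ is a tensor product in which every correlation with $E$ sits inside the pure factor $\phi_i^{B_i^RE}$, this Holevo-preservation can be shown to force the reduced supports $\sigma_i^B$ to be pairwise orthogonal in $\cH_B$: any overlap on $B$ between two tensor products of this type propagates to an overlap on $BE$ that strictly lowers the Holevo information on $B$ relative to $BE$. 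Once orthogonality is established, $\sum_i p_i\sigma_i^{BE}$ is an honest direct sum, and absorbing $U_i$ into the choice of basis within the $i$-th block yields the claimed decomposition of $\rho^{BE}$ and of $\cH_B$.

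The main obstacle is precisely the penultimate step, converting Holevo-preservation plus the local tensor-product form of the $\sigma_i^{BE}$ into subspace orthogonality of the $\sigma_i^B$. A cleaner way to close this gap is to appeal to the uniqueness of the Koashi--Imoto decomposition $\rho^{BE} = \bigoplus_i p_i\, \rho_i^{B_i^L}\otimes \omega_i^{B_i^RE}$: additivity of $E_F$ across the direct sum (proved by block-projecting any optimal pure decomposition) combined with the hypothesis forces the termwise equality $E_F(\omega_i) = S(\omega_i^{B_i^R})-S(\omega_i^{B_i^RE})$, and the atomicity of the Koashi--Imoto factors on $B_i^R$ then rules out any further tensor-product splitting inside $\omega_i^{B_i^RE}$ of the kind produced by Lemma~\ref{lemma:IG}, leaving no option but $\omega_i^{B_i^RE}$ pure.
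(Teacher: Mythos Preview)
Your sufficiency argument is fine. For necessity, you are very close to the paper's proof in one key respect but you mishandle the crucial step.

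Your Holevo-preservation identity $\chi(\{p_i,\sigma_i^{BE}\}) = \chi(\{p_i,\sigma_i^B\})$ is exactly the statement $I(X:E|B)=0$ for the cq-state $\sum_i p_i\,\proj{i}^X\otimes\sigma_i^{BE}$. This is indeed the pivotal observation, and it is precisely what the paper uses (though the paper builds $X$ from the \emph{pure} $E_F$-optimal ensemble rather than the mixed $G$-optimal one). However, you then try to convert $I(X:E|B)=0$ into pairwise orthogonality of the supports of the $\sigma_i^B$, and this is where the argument breaks. That orthogonality is neither true in general nor what is needed: the ensemble members have no reason to occupy mutually orthogonal blocks of $\cH_B$, and your heuristic that ``any overlap on $B$ propagates to an overlap on $BE$ that strictly lowers Holevo information'' is not a valid inference. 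What $I(X:E|B)=0$ actually buys you is the Hayden--Jozsa--Petz--Winter structure theorem~\cite{Hayden-equality}: there is a decomposition $\cH_B=\bigoplus_j \cH_{B_j^L}\otimes\cH_{B_j^R}$ such that the \emph{whole} cq-state (and hence $\rho^{BE}$) splits as $\bigoplus_j q_j\,\rho_j^{XB_j^L}\otimes\omega_j^{B_j^RE}$. This is the paper's route, and it delivers the direct sum without any orthogonality claim about the individual $\sigma_i$.

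Your fallback via the Koashi--Imoto decomposition is morally the same structure theorem, so that part is sound. The remaining gap is your final sentence: from $E_F(\omega_j)=S(\omega_j^{B_j^R})-S(\omega_j^{B_j^RE})$ together with atomicity of the block you want to conclude that $\omega_j$ is pure. But Lemma~\ref{lemma:IG} has hypothesis $C_\leftarrow=S(B)-S(BE)$, not $E_F=S(B)-S(BE)$; to get back to that hypothesis you must re-run the $G$/$E_F$ chain on $\omega_j$, produce a new ensemble, and apply the structure theorem again. In other words, ``atomicity forbids further splitting, hence pure'' is not a one-line deduction; it is an \emph{iteration} of the whole argument on $\omega_j$, terminating because $\dim B_j^R<\dim B$. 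That iteration is exactly how the paper closes the proof, and it is what your write-up is missing.
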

\begin{proof}
Suppose the optimal realization of $E_F(B:E)$ is the ensemble $\{p_x,\ket{\psi_x}\}$,
and construct the state $\rho^{XBE}=\sum_x p_x \proj{x} \otimes \psi_x^{BE}$. 
From the condition $S(B)-S(BE)=E_F(B:E)$, we get $S(B)-S(BE)=\sum_x p_x S(\psi^B_x)$. 
This condition can be expressed as $I(X:E|B)=0$, where $I(X:E|B)=S(XB)+S(EB)-S(XBE)-S(B)$ 
is the conditional quantum mutual information. From \cite{Hayden-equality}, we know 
that $I(X:E|B)=0$ if and only if the state $\rho^{XBE}$ can be decomposed as
\[
  \rho^{XBE} = \bigoplus_i q_i \rho_i^{XB_i^L} \otimes \rho_i^{B_i^RE},
\]
where the system $B$ is decomposed into the direct sum of tensor products
\[
  {\cal H}_{B} = \bigoplus_i {\cal H}_{B_i^L}\otimes{\cal H}_{B_i^R}.
\] 

Thus we have
\[
  \rho^{BE} = \bigoplus_i q_i\rho_i^{B_i^L}\otimes\rho_i^{B_i^RE}.
\]
If all the states $\rho_i^{B_i^RE}$ are pure, then we are done. In general, 
some of $\rho_i^{B_i^RE}$ may be mixed. Apply the condition $S(B)-S(BE)=E_F(B:E)$ 
to the structured state $\rho^{BE}$, we get $\sum q_i(S(B_i^R)-S(B_i^RE))=\sum q_iE_F(B_i^R:E)$.  
Since $S(B)-S(BE)\le E_F(B:E)$ is true for all of the components
$\rho_i^{B_i^RE}$, we arrive at $S(B_i^R)-S(B_i^RE)=E_F(B_i^R:E)$ 
for each $i$.
So we can use the argument again and get that the structure of each state 
$\rho_i^{B_i^R:E}$ is of the direct sum of tensor products. If some of the 
new states $\rho_{i(j)}^{B_{i(j)}^{R}E}$ are mixed, we repeat the argument 
for these states. In each iteration, the dimension is reduced because of 
the direct sum of tensor products. Since system $B$ is a finite dimensional Hilbert 
space, the iteration ends after finitely many steps when all the states 
$\rho_{i(\cdots)}^{B_{i(\cdots)}^{R}E}$ are pure. After renumbering the 
labels, we get the desired decomposition where all the states $\phi_{i}^{B_i^RE}$ are pure. 
\end{proof}



\end{document}